\newcommand{\ci}{\mbox{\protect $\: \perp \hspace{-2.3ex}
\perp$ }}
\newcommand{\cd}{\,|\,}
\theoremstyle{plain}
\theoremstyle{definition}
\theoremstyle{remark}
\newtheorem{example}{Example}
\theoremstyle{plain}
\newtheorem{theorem}{Theorem}[section]
\newtheorem{proposition}[theorem]{Proposition}
\newtheorem{lemma}[theorem]{Lemma}
\theoremstyle{definition}
\newtheorem{definition}[theorem]{Definition}
\theoremstyle{remark}
\icmltitlerunning{Towards Robust Causal Effect Identification Beyond Markov Equivalence}
\begin{document}

\twocolumn[
\icmltitle{Towards Robust Causal Effect Identification Beyond Markov Equivalence}



\icmlsetsymbol{equal}{*}

\begin{icmlauthorlist}
\icmlauthor{Kai Z. Teh}{yyy}
\icmlauthor{Kayvan Sadeghi}{yyy}
\icmlauthor{Terry Soo}{yyy}
\end{icmlauthorlist}

\icmlaffiliation{yyy}{Department of Statistical Science, University College London, London, UK}

\icmlcorrespondingauthor{Kai Teh}{kai.teh.21@ucl.ac.uk}

\icmlkeywords{Machine Learning, ICML}

\vskip 0.3in
]



\printAffiliationsAndNotice{\icmlEqualContribution} 

\begin{abstract}
Causal effect identification typically requires a fully specified causal graph, which can be difficult to obtain in practice. We provide a sufficient criterion for identifying causal effects from a candidate set of Markov equivalence classes with added background knowledge, which represents cases where determining the causal graph up to a single Markov equivalence class is challenging. Such cases can happen, for example, when the untestable assumptions (e.g. faithfulness) that underlie causal discovery algorithms do not hold.

\end{abstract}

\section{Introduction}\label{sec:intro}
Identifying causal effects from observational data is important when intervention experiments cannot be performed. Under causal sufficiency, observational data can be represented using a causal graph in the form of a directed acyclic graph (DAG). If the causal DAG is known, then all causal effects can be identified from observational data.

However, the underlying causal DAG is usually unknown and has to be obtained from observational data using causal discovery algorithms, which can at most learn the causal DAG up to its Markov equivalence class, represented as a completed partially directed acyclic graph (CPDAG) \citep{meek}. As such, works regarding causal effect identification for CPDAGs \citep{cpdagidentify}, as well as for maximally oriented partially directed acyclic graphs (MPDAGs), which represent CPDAGs with background knowledge constraints, have been proposed \citep{perk}. These methods assume that the Markov equivalence class of the true causal DAG is known.

However, causal discovery algorithms often assume untestable assumptions, such as faithfulness, which can often be too strong in practice. When these untestable assumptions are violated, the Markov equivalence class returned by causal discovery algorithms need not be unique and can instead be a candidate set of Markov equivalence classes \citep{teh1}.

As an example, let the observational distribution \(P\) (over \(X_i\)) be generated via
\begin{align*}
  \epsilon_i, \phi_j \stackrel{\text{i.i.d.}}{\sim} Bern(\tfrac{1}{2}), i=1,\ldots ,4, j=1,\ldots,5,\nonumber\\
  X_1 = (\phi_1, \phi_2, \epsilon_1),\nonumber\\
  X_2=  (X^1_1,\phi_3, \epsilon_2),\nonumber\\
  X_3=  (\phi_4,\phi_5, \epsilon_3),\nonumber\\
  X_4=   (X_1^1+X_3^1, X_2^1+X_3^2, X_2^2, \epsilon_4),\label{sem1}
\end{align*}
where 
\(X^j_i\) denotes the \(j\)-th entry from the left of \(X_i\).

Both CPDAGs \(\mathcal{G}_1\) and \(\mathcal{G}_2\) in Figure \ref{ex1} represent \(P\) using the least number of edges possible. Hence, using \(P\) as the input, the Sparsest Permutation causal discovery algorithm and its variants \citep{UhlSP, lam} would not be able to uniquely output the Markov equivalence class of the causal DAG. The practitioner then has to work with all possible outputs, represented as a candidate class of non-Markov equivalent CPDAGs in Figure \ref{ex1}. 

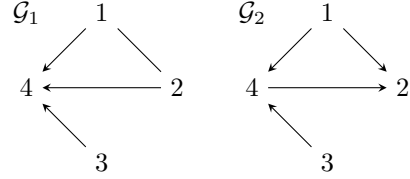
\begin{figure}
    \centering
     \begin{tikzpicture}[>=stealth]
     \node () at (-1,1) {$\mathcal{G}_1$};
\node (e1) at (0,1) {$1$};
\node (e2) at (-1,0) {$4$};
\node (e3) at (1,0) {$2$};
\node (e4) at (0,-1) {$3$};

\draw [->] (e1) to (e2);
\draw [->] (e4) to (e2);
\draw [->] (e3) to (e2);
\draw [-] (e1) to (e3);

\node () at (2,1) {$\mathcal{G}_2$};
\node (e1) at (3,1) {$1$};
\node (e2) at (2,0) {$4$};
\node (e3) at (4,0) {$2$};
\node (e4) at (3,-1) {$3$};

\draw [->] (e1) to (e2);
\draw [->] (e4) to (e2);
\draw [->] (e2) to (e3);
\draw [->] (e1) to (e3);
    \end{tikzpicture}
    \caption{Non-Markov equivalent CPDAGs \(\mathcal{G}_1\) and \(\mathcal{G}_2\) that represent the same distribution \(P\).}\label{ex1}
    \label{fig:enter-label}
\end{figure}
    
In this work, we aim to provide causal effect identification results that are robust when the Markov equivalence class of the causal DAG with background knowledge, cannot be uniquely determined, the setting of which is represented by a candidate set of MPDAGs.

\section{Background}
Slightly abusing notation, capital letters (e.g. \(X\)) will be used to denote both nodes in a graph and the associated variables, and bold capital letters (e.g. \(\boldsymbol{X}\)) will be used to denote both sets of nodes in a graph and the associated random vectors.
\subsection{Graphs}
Let \(\mathcal{G}\) denote a graph over a finite set of nodes \(\boldsymbol{V}\), 
with edges that are either directed  
(\(\rightarrow\)) or   
undirected 
( --- ) between any two adjacent nodes. 
A \emph{path} between the nodes \(V_0\) and \(V_m\) is a sequence of nodes 
\(\langle V_0,\ldots, V_m \rangle\), 
such that for all \(i \in \{0\ldots,m-1\}\), consecutive nodes \(V_i\) and \(V_{i+1}\) are adjacent. If the edges are all directed as \(V_i\xrightarrow{} V_{i+1}\),
the path is a \emph{directed path} from \(V_0\) to \(V_m\); if the edges are either undirected or directed as \(V_i\xrightarrow{} V_{i+1}\), the path is a \emph{semi-directed path}, if, in addition, \(V_0=V_i\), the sequence is a \emph{semi-directed cycle}; if edges between consecutive nodes are all undirected, then the path is an \emph{undirected path}. Given two disjoint subset of nodes \(\boldsymbol{X},\boldsymbol{Y}\subseteq \boldsymbol{V}\), a path from \(\boldsymbol{X}\) to \(\boldsymbol{Y}\) is a path from some \(X\in \boldsymbol{X}\) to some \(Y\in \boldsymbol{Y}\); if only the first node is in \(\boldsymbol{X}\), the path is said to be \emph{proper}.

Given a subset of nodes \(\boldsymbol{D}\subseteq \boldsymbol{V}\) in graph \(\mathcal{G}\), let \(\mathcal{G}_{\boldsymbol{D}}\) denote the induced subgraph, the graph over nodes \(\boldsymbol{D}\), of which the edges are the same as the edges between nodes in \(\boldsymbol{D}\) of graph \(\mathcal{G}\). 

Given a subset of nodes \(\boldsymbol{D}\subseteq \boldsymbol{V}\) in graph \(\mathcal{G}\), let \(\text{Pa}_\mathcal{G}(\boldsymbol{D})\) denote the parents of \(\boldsymbol{D}\) in graph \(\mathcal{G}\), the set of nodes \(X\) such that there exists a directed edge \(X\rightarrow D\) in \(\mathcal{G}\) for some node \(D\in \boldsymbol{D}\), and let \(\text{An}_\mathcal{G}(\boldsymbol{D})\) denote the ancestors of \(\boldsymbol{D}\) in graph \(\mathcal{G}\), the set of nodes \(X\) such that there exists a directed path from \(X\) to some node \(D\in \boldsymbol{D}\) in \(\mathcal{G}\).

A \emph{chain component} \(\boldsymbol{\tau}\) in graph \(\mathcal{G}\) is a maximal set of nodes such that every node \(X,Y\in \boldsymbol{\tau}\) is connected by an undirected path.


A joint
density \(p\) is associated to the set of nodes \(\boldsymbol{V}\) of DAG \(\mathcal{G}\), via the following Markov property for DAGs. 
\begin{definition}[DAG Markov property]
    A density \(p\) is Markovian to DAG \(\mathcal{G}\) if \(p(\boldsymbol{v})=\prod_{V_i\in \boldsymbol{V}}p_{V_i}(v_i\cd \text{pa}_{\mathcal{G}}(v_i))\).
\end{definition}

The \emph{Markov equivalence class} of a DAG \(\mathcal{G}\) is the set of all DAGs that have the same set of d-separations \cite{pearlbook} as the DAG \(\mathcal{G}\), and can be represented by a \emph{completed partially directed acyclic graph} (CPDAG) \citep{meek}.

Given a Markov equivalence class and background knowledge in the form of a set of directed edges \(\mathcal{E}\), we represent a subset of the Markov equivalence class, which consists of DAGs \(\mathcal{G}'\) such that all the edges in \(\mathcal{E}\) are in \(\mathcal{G}'\), as a \emph{maximally oriented partially directed acyclic graph} (MPDAG) \(\mathcal{G}\); let \([\mathcal{G}]\) denote this set of DAGs represented by the MPDAG \(\mathcal{G}\). We say that the joint density \(p\) is \emph{compatible} to MPDAG \(\mathcal{G}\) if \(p\) is Markovian to all DAGs \(\mathcal{G}'\in [\mathcal{G}]\). CPDAGs are then special instances of MPDAGs when \(\mathcal{E}=\emptyset\).


In general, MPDAGs can contain semi-directed cycles, as shown in \citet{perk}, here we will focus on \emph{strictly acyclic}-MPDAGs (SA-MPDAGs) which are MPDAGs with no semi-directed cycles. Since CPDAGs are MPDAGs with no semi-directed cycles, SA-MPDAGs include CPDAGs.

All nodes of an SA-MPDAG \(\mathcal{G}\) can be partitioned into chain components \(\{\boldsymbol{\tau_1},\ldots, \boldsymbol{\tau_m}\}\). 
Given an SA-MPDAG \(\mathcal{G}\) and a node subset \(\boldsymbol{D}\subseteq \boldsymbol{V}\) of \(\mathcal{G}\), let \(CD_{\mathcal{G}}(\boldsymbol{D})\) denote the chain decomposition of \(\boldsymbol{D}\) using the SA-MPDAG \(\mathcal{G}\).
\begin{definition}[Chain Decomposition \(CD_{\mathcal{G}}(.)\)]
    Given an SA-MPDAG \(\mathcal{G}\), and partition of nodes \(\boldsymbol{V}\) into chain components \(\{\boldsymbol{\tau_1},\ldots, \boldsymbol{\tau_m}\}\), the chain decomposition of the node subset \(\boldsymbol{D}\subseteq \boldsymbol{V}\) is the partition \(\{\boldsymbol{D_1}, \ldots, \boldsymbol{D_k}\}\) consisting of the subsets \(\boldsymbol{D}\cap \boldsymbol{\tau_i}\) for \(i\in \{1,\ldots, m\}\) that are non-empty.
\end{definition}

Given \(\boldsymbol{D_i}\in CD_{\mathcal{G}}(\boldsymbol{D})\) for some node subset \(\boldsymbol{D}\subseteq \boldsymbol{V}\) of graph \(\mathcal{G}\), let \(\tau_\mathcal{G}(\boldsymbol{D_i})\) denote the chain component in graph \(\mathcal{G}\) such that \(\boldsymbol{D_i}\subseteq\tau_\mathcal{G}(\boldsymbol{D_i})\).

Note that chain decomposition is similar to the PTO and PCO algorithms from \citet{cpdagidentify} and \citet{perk}, specialised to chain graphs \cite{chain}, of which SA-MPDAGs are a subclass.
\subsection{Causal Effect Identification}
Given an observational density \(p\) Markovian to DAG \(\mathcal{G}\) and node subset \(\boldsymbol{X}\subseteq \boldsymbol{V}\) in \(\mathcal{G}\), we denote \(p^{\textnormal{do}_{\mathcal{G}}(\boldsymbol{x})}\) as the \emph{interventional density} over the remaining nodes \(\boldsymbol{\Bar{V}}=\boldsymbol{V}\backslash \boldsymbol{X}\) after setting variables in \(\boldsymbol{X}\) to a fixed value \(\boldsymbol{x}\), defined as
\begin{align*}
    p^{\textnormal{do}_{\mathcal{G}}(\boldsymbol{x})}(\boldsymbol{\Bar{v}})= \prod_{V_i\in \boldsymbol{\Bar{V}}}p_{V_i}(v_i\cd \textnormal{pa}_{\mathcal{G}}(v_i)),
\end{align*}
for values \(\textnormal{pa}_{\mathcal{G}}(v_i)\) that agree with the intervened values \(\boldsymbol{x}\). 
This is known as the truncated factorisation formula or the g-formula \citep{pearlbook}.

Given disjoint node subsets \(\boldsymbol{X},\boldsymbol{Y}\subseteq\boldsymbol{ V}\), the causal effect of \(\boldsymbol{X}\) on \(\boldsymbol{Y}\), such as the average treatment effect, is a functional of \(p^{\text{do}_{\mathcal{G}}(\boldsymbol{x})}_{\boldsymbol{Y
}}\), the marginal interventional density on \(\boldsymbol{Y}\).

To identify the causal effect of \(\boldsymbol{X}\) on \(\boldsymbol{Y}\),  given that causal DAG \(\mathcal{G}\) is known, \(p^{\text{do}_{\mathcal{G}}(\boldsymbol{x})}_{\boldsymbol{Y
}}\) can be obtained using the observational density \(p\) via the truncated factorisation formula.

However, when the causal DAG is only known up to its Markov equivalence class and some background knowledge, represented as an MPDAG, \citet{perk} provided the following sufficient and necessary criteria when the causal effect is identifiable, that is, when \(p^{\text{do}_{\mathcal{G}}(\boldsymbol{x})}_{\boldsymbol{Y
}}\) is uniquely computable from the DAGs represented by the MPDAG.

\begin{definition}[Identifiability for MPDAGs]
    Given disjoint node subsets \(\boldsymbol{X},\boldsymbol{Y}\subseteq \boldsymbol{V}\) of MPDAG \(\mathcal{G}\), the causal effect of \(\boldsymbol{X}\) on \(\boldsymbol{Y}\) is identifiable if for all observational densities \(p\) compatible to the MPDAG \(\mathcal{G}\), there does not exist DAGs \(\mathcal{G}_1, \mathcal{G}_2\in [\mathcal{G}]\), such that 
    \begin{align*}
        p^{\text{do}_{\mathcal{G}_1}(\boldsymbol{x})}_{\boldsymbol{Y
}}\neq p^{\text{do}_{\mathcal{G}_2}(\boldsymbol{x})}_{\boldsymbol{Y
}}.
    \end{align*}
\end{definition}
If identifiability holds in MPDAG \(\mathcal{G}\), then we can write \(p^{\text{do}_{\mathcal{G}}(\boldsymbol{x})}_{\boldsymbol{Y
}}\) without ambiguity.
\begin{theorem}[\citet{perk}]\label{perkt}
    Given disjoint node subsets \(\boldsymbol{X},\boldsymbol{Y}\subseteq \boldsymbol{V}\) of MPDAG \(\mathcal{G}\), the causal effect of \(\boldsymbol{X}\) on \(\boldsymbol{Y}\) is identifiable if and only if there are no proper semi-directed paths from \(\boldsymbol{X}\) to \(\boldsymbol{Y}\) that start with an undirected edge, with the following identification formula. For any observational density \(p\) compatible to \(\mathcal{G}\),
    \begin{align*}
        p^{\textnormal{do}_{\mathcal{G}}(\boldsymbol{x})}_{\boldsymbol{Y
}}(\boldsymbol{y})=\int \prod^{k}_{j=1}p_{\boldsymbol{B_j}}(\boldsymbol{b_j}\cd {\textnormal{pa}_{\mathcal{G}}(\boldsymbol{B_j})})d\boldsymbol{b},
    \end{align*}
    where \(\boldsymbol{B}=\textnormal{An}_{\mathcal{G}_{\boldsymbol{V\backslash X}}}(\boldsymbol{Y})\backslash \boldsymbol{Y}\), and \(\{\boldsymbol{B_1},\ldots, \boldsymbol{B_{k}}\} = CD_{\mathcal{G}}(\textnormal{An}_{\mathcal{G}_{\boldsymbol{V\backslash X}}}(\boldsymbol{Y}))\), for values \({\textnormal{pa}_{\mathcal{G}}(\boldsymbol{B_j})}\) that agree with the intervened values \(\boldsymbol{x}\).
\end{theorem}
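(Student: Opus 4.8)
The plan is to prove both directions while exploiting the chain-graph structure of an SA-MPDAG \(\mathcal{G}\): its nodes partition into chain components \(\{\boldsymbol{\tau_1},\dots,\boldsymbol{\tau_m}\}\), all edges between distinct components are directed, every \(\mathcal{G}'\in[\mathcal{G}]\) arises by acyclically orienting each component without creating a new v-structure (so \(\text{Pa}_{\mathcal{G}'}(\boldsymbol{\tau_i})=\text{Pa}_{\mathcal{G}}(\boldsymbol{\tau_i})\) for every \(\mathcal{G}'\)), and any density \(p\) compatible to \(\mathcal{G}\) obeys the chain-graph factorisation \(p(\boldsymbol{v})=\prod_i p(\boldsymbol{\tau_i}\cd\text{pa}_{\mathcal{G}}(\boldsymbol{\tau_i}))\), with each component conditional factorising along whichever orientation is chosen. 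Writing \(\boldsymbol{A}=\text{An}_{\mathcal{G}_{\boldsymbol{V}\setminus\boldsymbol{X}}}(\boldsymbol{Y})\), the structural heart of the argument is an \emph{amenability lemma}: if \(\mathcal{G}\) has no proper semi-directed path from \(\boldsymbol{X}\) to \(\boldsymbol{Y}\) starting with an undirected edge, then no chain component that meets \(\boldsymbol{A}\) contains a node of \(\boldsymbol{X}\), and every \(\text{Pa}_{\mathcal{G}}(\boldsymbol{B_j})\) lies in \(\boldsymbol{A}\cup\boldsymbol{X}\), so the claimed formula is well posed. Both facts follow by contradiction: an undirected path inside a component from \(X\in\boldsymbol{X}\) to some \(B\in\boldsymbol{A}\), concatenated with a directed \(\boldsymbol{X}\)-avoiding path \(B\to\cdots\to\boldsymbol{Y}\), is exactly a forbidden path; and a parent \(Z\to B\) with \(B\in\boldsymbol{A}\) and \(Z\notin\boldsymbol{X}\) prolongs such an \(\boldsymbol{X}\)-avoiding path, forcing \(Z\in\boldsymbol{A}\).

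For the \emph{only-if} direction, assume a forbidden path exists and take one of minimal length, \(\pi=\langle X{=}V_0,V_1,\dots,V_q\rangle\) with \(V_0-V_1\) undirected and interior avoiding \(\boldsymbol{X}\cup\boldsymbol{Y}\). Using that an undirected edge of an MPDAG admits either orientation among the members of \([\mathcal{G}]\), I would produce \(\mathcal{G}_1,\mathcal{G}_2\in[\mathcal{G}]\) whose only difference stems from orienting \(V_0-V_1\) one way versus the other — with \(V_1\) a descendant of \(X\) in \(\mathcal{G}_1\) only — and then exhibit a single compatible density, a deterministic ``copy''-type construction along \(\pi\) in the spirit of the example of Section~\ref{sec:intro} with independent noise elsewhere, for which \(p^{\text{do}_{\mathcal{G}_1}(\boldsymbol{x})}_{\boldsymbol{Y}}\) depends on \(\boldsymbol{x}\) through \(\pi\) while \(p^{\text{do}_{\mathcal{G}_2}(\boldsymbol{x})}_{\boldsymbol{Y}}\) does not. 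The bookkeeping to check is that this density is Markov to \emph{every} DAG in \([\mathcal{G}]\) (not merely \(\mathcal{G}_1,\mathcal{G}_2\)) and that deleting the arrows into \(\boldsymbol{X}\) genuinely severs the \(\boldsymbol{X}\)--\(\boldsymbol{Y}\) dependence in \(\mathcal{G}_2\) but not in \(\mathcal{G}_1\).

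For the \emph{if} direction and the formula, fix an arbitrary \(\mathcal{G}'\in[\mathcal{G}]\) and an arbitrary compatible \(p\). Applying the g-formula in \(\mathcal{G}'\) and then marginalising onto \(\boldsymbol{Y}\) — the post-intervention DAG restricted to \(\boldsymbol{V}\setminus\boldsymbol{X}\) being \(\mathcal{G}'_{\boldsymbol{V}\setminus\boldsymbol{X}}\), so all factors on non-ancestors of \(\boldsymbol{Y}\) integrate to \(1\) — gives
\begin{align*}
 p^{\text{do}_{\mathcal{G}'}(\boldsymbol{x})}_{\boldsymbol{Y}}(\boldsymbol{y})=\int \prod_{V_i\in\boldsymbol{A}'} p_{V_i}\bigl(v_i\cd\text{pa}_{\mathcal{G}'}(v_i)\bigr)\,d(\boldsymbol{a}'\setminus\boldsymbol{y}),\qquad\boldsymbol{A}'=\text{An}_{\mathcal{G}'_{\boldsymbol{V}\setminus\boldsymbol{X}}}(\boldsymbol{Y})\supseteq\boldsymbol{A},
\end{align*}
with parent values agreeing with \(\boldsymbol{x}\). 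One then marginalises out \(\boldsymbol{A}'\setminus\boldsymbol{A}\) by induction along a topological order of the chain components of \(\mathcal{G}\) meeting \(\boldsymbol{A}'\) (a partially consistent order in the sense of \citet{cpdagidentify,perk}): for a source component consisting entirely of nodes of \(\boldsymbol{A}'\setminus\boldsymbol{A}\) the integration is absorbed into the factors of later components, and for a component \(\boldsymbol{\tau}\) meeting \(\boldsymbol{A}\) — which by the amenability lemma avoids \(\boldsymbol{X}\), and for which \(\boldsymbol{A}'\cap\boldsymbol{\tau}\) is ancestral inside \(\mathcal{G}'_{\boldsymbol{\tau}}\) — integrating out \(\boldsymbol{\tau}\setminus\boldsymbol{A}'\) and then \((\boldsymbol{A}'\setminus\boldsymbol{A})\cap\boldsymbol{\tau}\) collapses its contribution to \(p_{\boldsymbol{B_j}}(\boldsymbol{b_j}\cd\text{pa}_{\mathcal{G}}(\boldsymbol{B_j}))\) with \(\boldsymbol{B_j}=\boldsymbol{A}\cap\boldsymbol{\tau}\). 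Since this expression no longer refers to \(\mathcal{G}'\), it is common to all \(\mathcal{G}'\in[\mathcal{G}]\) and all compatible \(p\), which is identifiability, and collecting the buckets yields the displayed formula.

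The main obstacle is this last inductive marginalisation: showing that, however \(\mathcal{G}'\) orients the chain components, each bucket's contribution stabilises at exactly \(p_{\boldsymbol{B_j}}(\boldsymbol{b_j}\cd\text{pa}_{\mathcal{G}}(\boldsymbol{B_j}))\). I expect to need the chordality (perfect-elimination) structure of chain components of \(\mathcal{G}\), so that every DAG-orientation marginalises cleanly along ancestral sets, together with the chain-graph Markov property to discard the parents of a component that are not parents of \(\boldsymbol{B_j}\), and the amenability lemma to guarantee that none of these marginalisations are contaminated by a node of \(\boldsymbol{X}\) lying inside a relevant component. For CPDAGs this reduces to the bucket argument of \citet{cpdagidentify}; the extra work for SA-MPDAGs is checking that the additional directed edges neither enlarge the relevant buckets nor disturb the marginalisations. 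The \emph{only-if} direction, by contrast, is a self-contained explicit construction, routine once the two orientations of the critical edge and the copy-type density are in place.
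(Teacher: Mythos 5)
This statement is not proved in the paper at all: it is imported verbatim from \citet{perk}, so the relevant comparison is with Perkovi\'c's original proof, which your outline essentially tracks (the amenability criterion plus a bucket/partial-ordering decomposition). The problem is that your plan has two genuine gaps. First, a scope mismatch: the theorem is stated for general MPDAGs, which can contain semi-directed cycles and are therefore not chain graphs — the paper itself stresses this in its discussion of generalising beyond SA-MPDAGs. Your entire argument is built on chain components, the chain-graph factorisation of a compatible \(p\), and the claim that every \(\mathcal{G}'\in[\mathcal{G}]\) is obtained by orienting each component independently with \(\textnormal{Pa}_{\mathcal{G}'}(\boldsymbol{\tau_i})=\textnormal{Pa}_{\mathcal{G}}(\boldsymbol{\tau_i})\); none of this machinery is available (and the independence of component orientations is not guaranteed) once semi-directed cycles are allowed, so at best you would prove the SA-MPDAG special case. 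Perkovi\'c's proof deliberately avoids this by working with a partial causal ordering of undirected ``buckets'' and with the DAGs in \([\mathcal{G}]\) directly, rather than with a chain-graph Markov property.

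Second, even within the SA-MPDAG case, the two steps that carry the theorem are only promised, not carried out. For the only-if direction you still owe the explicit compatible density and the verification that the two orientations of the critical edge \(V_0-V_1\) yield different interventional marginals on \(\boldsymbol{Y}\); this requires showing (via Meek-rule arguments on a shortest forbidden path) that some DAG in \([\mathcal{G}]\) orients the path into a directed path from \(\boldsymbol{X}\) to \(\boldsymbol{Y}\) while another reverses the first edge, and that the resulting dependence/independence of the \(\boldsymbol{Y}\)-marginal on \(\boldsymbol{x}\) actually holds for your copy-type construction. For the if direction, the inductive collapse of each bucket's contribution to \(p_{\boldsymbol{B_j}}(\boldsymbol{b_j}\cd\textnormal{pa}_{\mathcal{G}}(\boldsymbol{B_j}))\), uniformly over all orientations \(\mathcal{G}'\in[\mathcal{G}]\), is precisely the analytic heart of the theorem, and you explicitly flag it as an unresolved obstacle (chordality of the components, discarding non-ancestral parents, non-contamination by \(\boldsymbol{X}\)). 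Until those two steps are supplied, what you have is a correct high-level plan that mirrors the known proof, not a proof.
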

\section{Results}
Given a set of SA-MPDAGs \(\mathbb{G}=\{\mathcal{G}_1,\ldots, \mathcal{G}_n\}\) over the same set of nodes \(\boldsymbol{V}\), we introduce the following.
\begin{definition}[Simultaneous Identifiability]\label{simidentify}
    Given disjoint node subsets \(\boldsymbol{X}\) and \(\boldsymbol{Y}\), and a set of SA-MPDAGs \(\mathbb{G}\), the causal effect of \(\boldsymbol{X}\) on \(\boldsymbol{Y}\) is simultaneously identifiable, if for all observational densities compatible to all \(\mathcal{G}\in \mathbb{G}\), there does not exist DAGs \(\mathcal{G}_1, \mathcal{G}_2\in \bigcup_{\mathcal{G}\in \mathbb{G}}[\mathcal{G}]\) , such that 
    \begin{align*}
       p^{\text{do}_{\mathcal{G}_1}(\boldsymbol{x})}_{\boldsymbol{Y
}}\neq p^{\text{do}_{\mathcal{G}_2}(\boldsymbol{x})}_{\boldsymbol{Y
}}.
    \end{align*}
\end{definition}

Definition \ref{simidentify} states that, given an observational density \(p\) that is compatible to \emph{all} SA-MPDAGs in \(\mathbb{G}\), the interventional marginal density over \(\boldsymbol{Y}\) does not depend on the choice of causal DAG selected from the DAGs represented by all SA-MPDAGs in \(\mathbb{G}\). If simultaneous identifiability holds for \(\mathbb{G}\), we can write \(p^{\text{do}_{\mathbb{G}}(\boldsymbol{x})}_{\boldsymbol{Y
}}\) without ambiguity.

Given disjoint node subsets \(\boldsymbol{X},\boldsymbol{Y}\subseteq \boldsymbol{V}\), we first define Algorithm \ref{rwgraph} to re-weight and marginalise graphs.
\begin{algorithm}[h]
   \caption{Re-weighting and marginalising graphs}\label{rwgraph}
   {\bfseries Input:} graph $\mathcal{G}$, node subset $\boldsymbol{X}, \boldsymbol{Y}\subseteq \boldsymbol{V}$
   
 {\bfseries Output:} graph \( RM(\mathcal{G}; \boldsymbol{X}, \boldsymbol{Y})\)
\begin{algorithmic}[1]
\STATE for all \(I\in \boldsymbol{X}\cap \text{An}_{\mathcal{G}}(\boldsymbol{Y})\), remove directed edges of the form \(J\rightarrow I\), call this graph \(\mathcal{G}(\not \rightarrow \boldsymbol{X})\).

\STATE \textbf{return} \( RM(\mathcal{G}; \boldsymbol{X}, \boldsymbol{Y})\) as the induced subgraph \(\mathcal{G}(\not \rightarrow \boldsymbol{X})_{\text{An}_{\mathcal{G}(\not \rightarrow \boldsymbol{X})}(\boldsymbol{Y})}\).
\end{algorithmic}
\end{algorithm}

Note that step 1 of Algorithm \ref{rwgraph} is just the graphical intervention on \(\mathcal{G}\), except \(\mathcal{G}\) is allowed to be an SA-MPDAG. Step 2 then takes the ancestral margin of \(\boldsymbol{Y}\) of the resulting graph.

\begin{theorem}[Causal Identification Criterion]\label{main}
Given disjoint node subsets \(\boldsymbol{X}\) and \(\boldsymbol{Y}\) and a set of SA-MPDAGs \(\mathbb{G}\), let \(\boldsymbol{A_i}= \boldsymbol{X}\cap \textnormal{An}_{\mathcal{G}_i}(\boldsymbol{Y})\). If
\begin{enumerate}
    \item for all \(\mathcal{G}_i\in \mathbb{G}\), there are no proper semi-directed paths from \(\boldsymbol{X}\) to \(\boldsymbol{Y}\) that start with an undirected edge, and
    \item for any pair \(\mathcal{G}_i, \mathcal{G}_j \in \mathbb{G}\), one of the following holds:
    \begin{enumerate}
        \item for all  \(\boldsymbol{X_{k}} \in CD_{\mathcal{G}_i}( \boldsymbol{A_i})\), 
        \(\textnormal{Pa}_{\mathcal{G}_i}(\boldsymbol{X_{k}})= \textnormal{Pa}_{\mathcal{G}_j}(\boldsymbol{X_{k}})\) and likewise for the roles of \(i\) and \(j\) reversed, or
    \item graphs \(RM(\mathcal{G}_i;\boldsymbol{X},\boldsymbol{Y})\) and \(RM(\mathcal{G}_j;\boldsymbol{X},\boldsymbol{Y})\) are Markov equivalent,
    \end{enumerate}
\end{enumerate}

    then the causal effect of \(\boldsymbol{X}\) on \(\boldsymbol{Y}\) is simultaneously identifiable with the following identification formula. For any observational density \(p\) compatible to any \(\mathcal{G}_i\in \mathbb{G}\), 
    \begin{align*}
        p^{\text{do}_{\mathbb{G}}(\boldsymbol{x})}_{\boldsymbol{Y
}}(\boldsymbol{y})=\int \prod^{k}_{j=1}p_{\boldsymbol{B_j}}(\boldsymbol{b_j}\cd {\textnormal{pa}_{\mathcal{G}_i}(\boldsymbol{B_j})})d\boldsymbol{b}
    \end{align*}
    for any \(\mathcal{G}_i\in \mathbb{G}\), where \(\boldsymbol{B}=\textnormal{An}_{(\mathcal{G}_i)_{\boldsymbol{V\backslash X}}}(\boldsymbol{Y})\backslash \boldsymbol{Y}\), and \(\{\boldsymbol{B_1},\ldots, \boldsymbol{B_{k}}\} = CD_{\mathcal{G}_i}(\textnormal{An}_{(\mathcal{G}_i)_{\boldsymbol{V\backslash X}}}(\boldsymbol{Y}))\), for values \({\textnormal{pa}_{\mathcal{G}_i}(\boldsymbol{B_j})}\) that agree with the intervened values \(\boldsymbol{x}\).
\end{theorem}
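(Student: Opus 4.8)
The overall plan is to peel Definition~\ref{simidentify} down to a pairwise comparison of Perković's identification formula, show that each alternative in condition~(2) forces the formula attached to $\mathcal{G}_i$ to return the same function of $p$ as the one attached to $\mathcal{G}_j$, and then conclude by transitivity over $\mathbb{G}$. Concretely: fix an observational $p$ compatible with every $\mathcal{G}\in\mathbb{G}$ and take DAGs $\mathcal{G}_1'\in[\mathcal{G}_i]$ and $\mathcal{G}_2'\in[\mathcal{G}_j]$ (every DAG in $\bigcup_{\mathcal{G}\in\mathbb{G}}[\mathcal{G}]$ lies in some $[\mathcal{G}_i]$). Condition~(1) is exactly the hypothesis of Theorem~\ref{perkt} in $\mathcal{G}_i$ and in $\mathcal{G}_j$, and $p$ is compatible with both, so $p^{\text{do}_{\mathcal{G}_1'}(\boldsymbol{x})}_{\boldsymbol{Y}}$ equals the closed form built from $\mathcal{G}_i$ and $p^{\text{do}_{\mathcal{G}_2'}(\boldsymbol{x})}_{\boldsymbol{Y}}$ the one built from $\mathcal{G}_j$. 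Thus simultaneous identifiability — together with the displayed formula, which may then be read off any $\mathcal{G}_i\in\mathbb{G}$ — follows once we prove that for every ordered pair $\mathcal{G}_i,\mathcal{G}_j\in\mathbb{G}$ and every such $p$ these two closed forms coincide.

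The next ingredient is a graphical reading of the formula through Algorithm~\ref{rwgraph}. I would check that $RM(\mathcal{G}_i;\boldsymbol{X},\boldsymbol{Y})$ records precisely the data that Perković's formula consumes: its vertex set is $\text{An}_{(\mathcal{G}_i)_{\boldsymbol{V}\backslash\boldsymbol{X}}}(\boldsymbol{Y})$ together with the vertices of $\boldsymbol{A}_i$ still reaching $\boldsymbol{Y}$ after Step~1; deleting the edges into $\boldsymbol{A}_i$ and passing to the $\boldsymbol{Y}$-ancestral subgraph changes neither the chain components $\{\boldsymbol{B_1},\ldots,\boldsymbol{B_k}\}$ nor their $\mathcal{G}_i$-parent sets; and consequently the formula equals the $\boldsymbol{Y}$-margin of the chain-graph factorisation of $p$ along $RM(\mathcal{G}_i;\boldsymbol{X},\boldsymbol{Y})$ with the root chain components lying in $\boldsymbol{X}$ clamped to $\boldsymbol{x}$. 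In particular the formula is a functional of the pair $(RM(\mathcal{G}_i;\boldsymbol{X},\boldsymbol{Y}),p)$ only; one also checks that $RM(\mathcal{G}_i;\boldsymbol{X},\boldsymbol{Y})$ is again an SA-MPDAG with no proper semi-directed path from $\boldsymbol{X}$ to $\boldsymbol{Y}$ starting with an undirected edge, so Theorem~\ref{perkt} can be re-applied to it and returns the same formula.

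For the two cases: under condition~(2b) the two mutilated-and-marginalised graphs are Markov equivalent, hence share a skeleton and therefore the same vertex set (so the same $\boldsymbol{X}$-roots and the same integrated set $\boldsymbol{B}$), and one then argues that Markov-equivalent SA-MPDAGs with no undirected initial segment of a semi-directed $\boldsymbol{X}$-to-$\boldsymbol{Y}$ path yield the same interventional $\boldsymbol{Y}$-marginal — e.g.\ because they share a CPDAG and the formula ultimately depends only on that CPDAG's ancestral and chain-component structure — so the $\mathcal{G}_i$- and $\mathcal{G}_j$-formulas agree. Under condition~(2a) the vertex sets of the two $RM$-graphs may genuinely differ (one ancestral set can strictly contain the other), so I would instead work at the density level: compatibility of $p$ with both graphs, the equality of $\text{Pa}_{\mathcal{G}_i}$ and $\text{Pa}_{\mathcal{G}_j}$ on every chain component of $\boldsymbol{A}_i$ and of $\boldsymbol{A}_j$, and condition~(1) together let one insert or delete the surplus ancestral vertices in the $\mathcal{G}_i$-formula using conditional independences forced by the Markov structure shared by the two graphs, rewriting it as the $\mathcal{G}_j$-formula. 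A transitivity argument over all pairs in $\mathbb{G}$ then finishes.

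The work concentrates in the graphical reading and in case~(2a). The graphical reading is not immediate from Theorem~\ref{perkt}: it requires careful tracking of which parents survive the mutilation of edges into $\boldsymbol{A}_i$, and of the fact that restricting to the $\boldsymbol{Y}$-ancestral subgraph neither merges nor splits the chain components the formula uses — a handful of lemmas about SA-MPDAGs (acyclicity, closure under the Meek rules) and their induced subgraphs. Case~(2a) is the other pressure point, since no equality or Markov equivalence of the mutilated graphs is available; the argument must exploit the independences that compatibility with \emph{all} members of $\mathbb{G}$ imposes on $p$, showing that the surplus vertices in one ancestral set are conditionally independent of $\boldsymbol{Y}$ given what the other formula already conditions on. Case~(2b), by contrast, should reduce to a short invariance lemma for the identification formula across a Markov equivalence class.
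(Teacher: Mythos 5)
Your skeleton matches the paper's: reduce simultaneous identifiability to a pairwise comparison, use condition~1 with Theorem~\ref{perkt} to get per-graph identifiability and the closed form, read the formula off $RM(\mathcal{G}_i;\boldsymbol{X},\boldsymbol{Y})$, and handle condition~2b by an invariance-under-Markov-equivalence lemma. For 2b your target lemma is indeed what the paper proves (its Proposition~\ref{equal factorise}), but your stated justification is off: Markov-equivalent chain graphs need \emph{not} share chain components or parent sets (e.g.\ $A\rightarrow B$ versus $A-B$), so the formula does not ``depend only on the CPDAG's ancestral and chain-component structure''; what holds is that the \emph{evaluations} of the two factorisations of the same $p$ coincide, which the paper gets by first showing (using condition~1) that $RM(\mathcal{G};\boldsymbol{X},\boldsymbol{Y})$ retains the property that $I\rightarrow J-K$ forces $I\rightarrow K$, hence is Markov equivalent to a DAG (Proposition~\ref{cpdagmarkov}), and then moving between Markov-equivalent DAG factorisations by covered edge flips. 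That chain of lemmas is absent from your sketch, though it is plausibly recoverable.

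The genuine gap is case~2a, which you flag as the pressure point but resolve with the wrong mechanism. Condition~2a gives you only equality of parent sets, $\textnormal{Pa}_{\mathcal{G}_i}(\boldsymbol{X_k})=\textnormal{Pa}_{\mathcal{G}_j}(\boldsymbol{X_k})$ for the chain pieces of $\boldsymbol{A_i}$ and $\boldsymbol{A_j}$; it supplies no conditional independences, and your key claim --- that the ``surplus'' ancestral vertices are conditionally independent of $\boldsymbol{Y}$ given what the other formula conditions on --- is neither derived from 2a nor from joint compatibility, and no route to it is indicated. The paper's proof needs no such independence: its engine is Proposition~\ref{conversion}, which under condition~1 rewrites the Perković formula as the $\boldsymbol{Y}$-margin of the re-weighted density $p(\boldsymbol{v})\big/\prod_{\ell} p_{\boldsymbol{X_\ell}}\!\left(\boldsymbol{x_\ell}\mid \textnormal{pa}_{\mathcal{G}}(\boldsymbol{X_\ell})\right)$ with $\boldsymbol{X_\ell}$ ranging over $CD_{\mathcal{G}}(\boldsymbol{A})$. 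In that representation the graph enters \emph{only} through these denominator factors; one then checks that extra factors for chain pieces of $\boldsymbol{X}$ outside the relevant ancestral set can be inserted without changing the value (they integrate out exactly as in the proof of Proposition~\ref{conversion}), and condition~2a makes the corresponding factors for $\mathcal{G}_i$ and $\mathcal{G}_j$ literally the same functions, so the two integrals coincide term by term. Joint compatibility is used only to license this conversion for both graphs. Without this re-weighting identity (or a worked-out substitute), your 2a argument does not go through, and that identity is the main technical content of the paper's proof.
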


Note that simultaneously identifiability in \(\mathbb{G}\) implies identifiability in each \(\mathcal{G}_i\in \mathbb{G}\) which, by Theorem \ref{perkt}, is equivalent to condition 1 in Theorem \ref{main}, thus condition 1 is a necessary condition for simultaneous identifiability. The identification formula in Theorem \ref{main} is the identification formula in \citet{perk} applied to some \(\mathcal{G}_i\in \mathbb{G}\).

In the case of \(\mathcal{G}_i\) and \(\mathcal{G}_j\) being DAGs, the interventional densities \(p^{\text{do}_{\mathcal{G}_i}(\boldsymbol{x})}\) and \(p^{\text{do}_{\mathcal{G}_j}(\boldsymbol{x})}\) can be obtained from observational density \(p\) by re-weighting based on \(\mathcal{G}_i\) and \(\mathcal{G}_j\) respectively.  Condition 2a in Theorem \ref{main} can be thought of as a graphical condition ensuring that this re-weighting step is the same for both \(\mathcal{G}_i\) and \(\mathcal{G}_j\), generalised to SA-MPDAGs.

The Markov equivalence in condition 2b is with respect to the Markov property defined for chain graphs, characterised in \citet{chain}. Condition 2b can be thought of as the ancestral subgraphs of SA-MPDAGs \(\mathcal{G}_i\) and \(\mathcal{G}_j\) being interventionally Markov equivalent, however, this differs from \citet{I-MEC} in that observational Markov equivalence is not required and that Markov equivalence need only hold on the ancestral subgraph.
\subsection{Examples}
We illustrate Theorem \ref{main} using some examples. 

\begin{example}[Case when condition 2a holds]
Consider again the CPDAGs in Figure \ref{ex1}, we are interested in the causal effect of \(3\) on \(2\). 

There are no proper semi-directed paths from \(3\) to \(2\) that start with an undirected edge in both CPDAGs, thus condition 1 of Theorem \ref{main} holds. Since \(A_1=\emptyset\) and \(A_2=\{3\}\), for \(\{3\} \in CD_{\mathcal{G}_2}(A_2)\), we have \(\text{pa}_{\mathcal{G}_1}(\{3\})= \text{pa}_{\mathcal{G}_2}(\{3\})= \emptyset\). Thus, condition 2a of Theorem \ref{main} holds; given \(\mathbb{G}=\{\mathcal{G}_1, \mathcal{G}_2\}\), the causal effect of \(3\) on \(2\) is simultaneously identifiable.

In this case, intervening on \(3\) is equivalent to conditioning on \(3\). Note that \(p\) being compatible to CPDAG \(\mathcal{G}_1\) implies the independence \(2\ci 3\), as such although there exists a directed path from \(3\) to \(2\) in \(\mathcal{G}_2\), the marginal distribution of \(2\) is not affected by intervening on \(3\). Note that this example holds for any distribution \(P\) satisfying \(X_1\ci X_2\) and \(X_1 \ci X_2 \cd \{X_3, X_4\}\) and \(X_3\ci X_4 \cd \{X_1, X_2\}\).
\end{example}

\begin{example}[Case when condition 2b holds]
    \begin{figure}[h]
    \centering
     \begin{tikzpicture}[>=stealth]
     \node () at (-1.5,1) {$\mathcal{G}_1$};
\node (e1) at (0,1) {$1$};
\node (e2) at (-1,0) {$4$};
\node (e3) at (1,0) {$2$};
\node (e4) at (0,-1) {$3$};
\node (e5) at (-2,-1) {$5$};

\draw [->] (e1) to (e2);
\draw [->] (e4) to (e2);
\draw [-] (e1) to (e3);
\draw [-] (e4) to (e3);
\draw [->] (e2) to (e5);
\draw [->, bend right] (e1) to (e5);

\node () at (2.5,1) {$\mathcal{G}_2$};
\node (e1) at (4,1) {$1$};
\node (e2) at (3,0) {$4$};
\node (e3) at (5,0) {$2$};
\node (e4) at (4,-1) {$3$};
\node (e5) at (2,-1) {$5$};

\draw [->] (e1) to (e2);
\draw [->] (e1) to (e3);
\draw [->] (e4) to (e3);
\draw [->] (e2) to (e5);
\draw [->, bend right] (e1) to (e5);

\node () at (1.5,-2) {\small $RM(\mathcal{G}_1; \{4\}, \{5\})= RM(\mathcal{G}_2; \{4\}, \{5\})$};
\node (e1) at (2.5,-3) {$1$};
\node (e2) at (1.5,-3) {$4$};
\node (e5) at (0.5,-3) {$5$};

\draw [->] (e2) to (e5);
\draw [->, bend right] (e1) to (e5);
    \end{tikzpicture}
    \caption{Non-Markov equivalent SA-MPDAGs \(\mathcal{G}_1\) and \(\mathcal{G}_2\) with different adjacencies.}\label{ex2}
\end{figure}
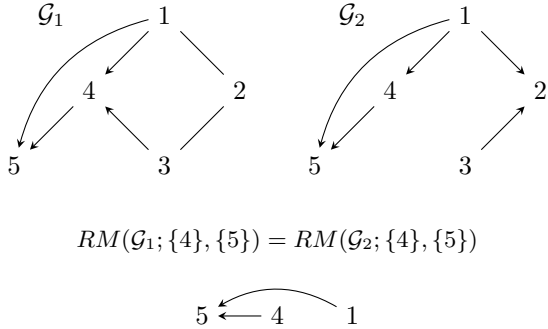
Consider the SA-MPDAGs in Figure \ref{ex2}, we are interested in the causal effect of \(4\) on \(5\). 

There are no proper semi-directed paths from \(4\) to \(5\) that start with an undirected edge in both CPDAGs, thus condition 1 of Theorem \ref{main} holds. We have \(RM(\mathcal{G}_1; \{4\}, \{5\})=RM(\mathcal{G}_2; \{4\}, \{5\})\). Thus, condition 2b of Theorem \ref{main} holds; given \(\mathbb{G}=\{\mathcal{G}_1, \mathcal{G}_2\}\), the causal effect of \(4\) on \(5\) is simultaneously identifiable.

Note that SA-MPDAGs of different adjacencies such as those in Figure \ref{ex2} may be obtained when the skeleton building step of the PC algorithm, which is the most computationally intensive part of the PC algorithm, lacks power.
\end{example}

\section{Discussion and Future Directions}
Causal effect identification when the causal graph cannot be fully specified is not new, with many settings for doing so being proposed. In our setting, the Markov equivalence class of the true causal DAG \(\mathcal{G}_0\) cannot be uniquely determined and we can only work with a candidate set of SA-MPDAGs (which includes CPDAGs) \(\mathbb{G}=\{\mathcal{G}_1,...,\mathcal{G}_n\}\) such that \(\mathcal{G}_0\in \bigcup_{\mathcal{G}_i\in \mathbb{G}}[\mathcal{G}_i]\), representing the Markov equivalence classes with background knowledge which could contain \(\mathcal{G}_0\). This can happen when untestable assumptions (e.g. faithfulness, Sparsest Markov Representation (SMR)) are not satisfied, causing causal discovery algorithms to not return a unique Markov equivalence class such as in Figure \ref{ex1}, or when the conditional independence testing step of causal discovery lacks power in detecting edges such as in Example \ref{ex2}. 

\subsection{Definition of Simultaneous Identifiability}
In Definition \ref{simidentify}, observational density \(p\) is Markovian to \emph{all} DAGs in \(\bigcup_{\mathcal{G}\in \mathbb{G}}[\mathcal{G}]\). When the represented DAGs are not Markov equivalent, a natural question would be: ``
Can density \(p\) be \emph{uniquely} Markovian to some graph that is not necessarily a DAG?''. \citet{teh1} has shown that even for anterial graphs \citep{lkayvan}, a general class of graphs containing DAGs, chain graphs and ancestral graphs, there still exist observational distributions that cannot be uniquely Markovian to \emph{any} anterial graph.


\subsection{Necessity and Restricting \(\mathbb{G}\)}
Theorem \ref{main} applies to any candidate set \(\mathbb{G}\)--no assumptions about \(\mathbb{G}\) has been made. 

To show the necessity of Theorem \ref{main}, given a set of SA-MPDAGs \(\mathbb{G}\), a counter-example of the following form is needed: a density \(p\) that is Markovian to all SA-MPDAGs in \(\mathbb{G}\) such that \(\mathbb{G}\) does not satisfy the conditions in Theorem \ref{main} and \(p^{\text{do}_{\mathcal{G}}(\boldsymbol{x})}_{\boldsymbol{Y
}}\neq p^{\text{do}_{\mathcal{G}'}(\boldsymbol{x})}_{\boldsymbol{Y
}}\) for some DAGs \(\mathcal{G}, \mathcal{G}'\in \bigcup_{\mathcal{G}_i\in \mathbb{G}}[\mathcal{G}_i]\).

Without further assumptions on the candidate set \(\mathbb{G}\), it can be challenging to construct, in general, such a density \(p\) to show necessity. 

In practice, also, \(\mathbb{G}\) is usually not arbitrary. For example, consider the SA-MPDAGs in \(\mathbb{G}\) of Figure \ref{ex1}, which are the Markovian CPDAGs to some distribution \(P\) with the least number of edges possible, representing all possible outputs of the Sparsest Permutation algorithm when faithfulness is violated. The SA-MPDAGs in \(\mathbb{G}\) may also have the same background knowledge (see Figure \ref{ex2} with both \(\mathcal{G}_1,\mathcal{G}_2\) having background knowledge \(\mathcal{E}=\{1\rightarrow 4, 4\rightarrow 5\}\)), which represents the case of applying the same background knowledge to different CPDAGs obtained from causal discovery.


\subsection{Generalising to MPDAGs}
The proof of Theorem \ref{main} relies on results for chain graphs in \citet{chain}. Since MPDAGs can have semi-directed cycles, they are in general not chain graphs. Hence, a different proof technique is needed to show simultaneous identifiability when \(\mathbb{G}\) is a set of general MPDAGs.

\nocite{langley00}

\bibliography{example_paper}
\bibliographystyle{icml2025}

\newpage
\appendix
\onecolumn
\section{Proofs}

We will use the following results on MPDAGs and chain graphs.

\begin{proposition}[Properties of SA-MPDAGs]\label{cpdagprop}
    SA-MPDAG \(\mathcal{G}\) contains only directed and undirected edges, and there are no nodes \(I,J,K\) of \(\mathcal{G}\), such that \(I\rightarrow J -K\) and \(I\) is not adjacent to \(K\). 
\end{proposition}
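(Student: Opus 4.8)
The plan is to treat the two assertions separately. That an SA-MPDAG contains only directed and undirected edges is immediate from the definition: an MPDAG is by construction a \emph{partially directed} acyclic graph, whose edges are directed or undirected only, and passing to the strictly acyclic case does not change the available edge types. So the content is entirely in the second assertion, which is Meek's first orientation rule recast as an invariant of the completed MPDAG.

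For the second assertion I would argue by contradiction. Suppose $\mathcal{G}$ contains $I \rightarrow J - K$ with $I$ and $K$ non-adjacent. I would use two basic facts about the DAGs in $[\mathcal{G}]$: (i) every directed edge of $\mathcal{G}$ occurs with the same orientation in every $\mathcal{G}' \in [\mathcal{G}]$, and each such $\mathcal{G}'$ has the same skeleton as $\mathcal{G}$ (adding background knowledge only orients edges, and Markov equivalent DAGs share a skeleton); and (ii) because $J - K$ is undirected in the \emph{maximally} oriented graph $\mathcal{G}$, its orientation is not forced, i.e.\ there exist $\mathcal{G}_1, \mathcal{G}_2 \in [\mathcal{G}]$ with $J \rightarrow K$ in $\mathcal{G}_1$ and $K \rightarrow J$ in $\mathcal{G}_2$ (the defining property of the MPDAG construction of \citet{meek}; this uses that $[\mathcal{G}] \neq \emptyset$, which holds whenever the background knowledge is consistent). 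By (i), both $\mathcal{G}_1$ and $\mathcal{G}_2$ contain $I \rightarrow J$ and have $I$ and $K$ non-adjacent.

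Now I would compare the triple $\langle I, J, K \rangle$ across the two DAGs. In $\mathcal{G}_2$ we have $I \rightarrow J \leftarrow K$ with $I$ and $K$ non-adjacent, an unshielded collider at $J$; in $\mathcal{G}_1$ we have $I \rightarrow J \rightarrow K$, so $J$ is not a collider on this path and $\langle I, J, K \rangle$ is not an unshielded collider. Hence $\mathcal{G}_1$ and $\mathcal{G}_2$ have the same skeleton but different sets of v-structures, so by the standard characterisation of Markov equivalence \citep{pearlbook} they are not Markov equivalent --- contradicting $\mathcal{G}_1, \mathcal{G}_2 \in [\mathcal{G}]$, whose members all lie in one Markov equivalence class. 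This contradiction establishes that no such configuration $I \rightarrow J - K$ exists.

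The step carrying the real content is fact (ii): that an undirected edge of a maximally oriented PDAG is genuinely un-forced, i.e.\ realised in both directions within $[\mathcal{G}]$. If one prefers not to invoke this as a black box, an alternative route is to read ``maximally oriented'' as ``closed under Meek's orientation rules'' and observe that $I \rightarrow J - K$ with $I,K$ non-adjacent is precisely the trigger of rule R1, whose firing would orient $J \rightarrow K$, contradicting closure; the v-structure comparison above is then exactly the soundness proof of R1. Everything else --- the edge-type claim, preservation of directed edges and of the skeleton, and the v-structure comparison --- is routine.
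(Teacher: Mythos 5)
Your proposal is correct, but note that the paper gives no detailed proof at all: it simply remarks after the proposition that the claim is a straightforward consequence of Meek's orientation rules, which is exactly the alternative you sketch in your final paragraph --- an MPDAG is by construction closed under Meek's rules, and \(I \rightarrow J - K\) with \(I,K\) non-adjacent is precisely the trigger of rule R1, so \(J-K\) could not have been left undirected. Your main argument takes a genuinely different and heavier route: it relies on your fact (ii), that an edge left undirected in the maximally oriented graph is realised in both orientations by DAGs in \([\mathcal{G}]\). That is not definitional under the ``closed under Meek's rules'' reading of maximality; it is Meek's completeness theorem for orientation with background knowledge (and needs \([\mathcal{G}]\neq\emptyset\), i.e.\ consistent background knowledge), both of which you correctly flag, so there is no gap --- but you are importing a stronger theorem than the closure-under-R1 argument requires. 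What your route buys is a self-contained explanation in the same vocabulary the paper uses elsewhere (same skeleton and unshielded colliders characterise Markov equivalence of the DAGs in \([\mathcal{G}]\)), and it makes explicit that the proposition holds for all MPDAGs, not only SA-MPDAGs, matching the paper's remark; what the paper's one-line route buys is brevity and no dependence on completeness. Your treatment of the first assertion (only directed and undirected edges, immediate from the definition of a partially directed graph) agrees with the paper's implicit treatment.
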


Proposition \ref{cpdagprop} holds for all MPDAGs, and is a straightforward consequence of the orientation rules from \citet{meek}, and implies that in an SA-MPDAG, all nodes sharing a chain component have the same parents.

\begin{proposition}[Markov properties of SA-MPDAGs]\label{cpdagmarkov}
    Given an SA-MPDAG \(\mathcal{G}\), for any DAG \(\mathcal{G}'\in [\mathcal{G}]\), \(\mathcal{G}\) and \(\mathcal{G}'\) are Markov equivalent.
\end{proposition}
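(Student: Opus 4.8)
The plan is to reduce the statement to the known characterisation of Markov equivalence for chain graphs in \citet{chain}. First I would note that an SA-MPDAG $\mathcal{G}$ is a chain graph: by Proposition~\ref{cpdagprop} it has only directed and undirected edges, and by the definition of ``strictly acyclic'' it has no semi-directed (equivalently, partially directed) cycles; a DAG $\mathcal{G}'\in[\mathcal{G}]$ is a chain graph as well. By \citet{chain}, two chain graphs over the same node set are Markov equivalent if and only if they have the same skeleton and the same minimal complexes, so it suffices to verify these two conditions for $\mathcal{G}$ and an arbitrary $\mathcal{G}'\in[\mathcal{G}]$. The skeleton condition is immediate from the definition of $[\mathcal{G}]$, since a DAG in $[\mathcal{G}]$ is obtained by orienting the undirected edges of $\mathcal{G}$ and hence has the same adjacencies.

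The main work is to show that the minimal complexes of $\mathcal{G}$ are exactly its immoralities (v-structures) $a\to c\leftarrow b$ with $a,c,b$ inducing a path and $a$ not adjacent to $b$; since $\mathcal{G}'$ is a DAG, its minimal complexes are trivially its immoralities. Here I would establish the auxiliary claim that whenever $a\to v_1 - v_2 - \cdots - v_k$ is a path in $\mathcal{G}$ with each $v_i - v_{i+1}$ undirected, then in fact $a\to v_j$ for every $j$. The case $j=1$ is the hypothesis; for the inductive step, from $a\to v_j - v_{j+1}$ Proposition~\ref{cpdagprop} forces $a$ and $v_{j+1}$ to be adjacent, this edge cannot be $v_{j+1}\to a$ (that would close the semi-directed cycle $a\to v_j - v_{j+1}\to a$), and it cannot be the undirected edge $a - v_{j+1}$ (that would put $a$ in the chain component of $v_{j+1}$, hence of $v_1$, contradicting $a\to v_1$, since a chain graph has no directed edge inside a chain component), so $a\to v_{j+1}$. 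Applying this claim, and its mirror image starting from $b$, to any complex $a\to v_1 - \cdots - v_k\leftarrow b$ of $\mathcal{G}$ with $k\ge 2$ yields $a\to v_1\leftarrow b$, i.e.\ an immorality on the strictly smaller node set $\{a,v_1,b\}$; hence no complex with $k\ge 2$ is minimal, while every immorality is itself a minimal complex.

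It then remains to check that $\mathcal{G}$ and $\mathcal{G}'$ have the same immoralities. Let $\mathcal{G}^*$ be the CPDAG of the Markov equivalence class underlying $\mathcal{G}$. Every $\mathcal{G}'\in[\mathcal{G}]$ is a DAG of that class, so its immoralities are those of $\mathcal{G}^*$; and the MPDAG $\mathcal{G}$ is obtained from $\mathcal{G}^*$ by orienting further edges (the background set, then Meek's rules \citep{meek}), so every immorality of $\mathcal{G}^*$ is still present in $\mathcal{G}$. Conversely, assuming the background knowledge is consistent (so $[\mathcal{G}]\neq\emptyset$), soundness of Meek's rules gives that every directed edge of $\mathcal{G}$ is oriented the same way in every DAG of $[\mathcal{G}]$, whence an immorality $a\to c\leftarrow b$ of $\mathcal{G}$ appears as a v-structure in some $\mathcal{G}'\in[\mathcal{G}]$ and therefore in $\mathcal{G}^*$. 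Thus $\mathcal{G}$, $\mathcal{G}^*$, and $\mathcal{G}'$ have the same immoralities, which by the previous paragraph are exactly the minimal complexes of both $\mathcal{G}$ and $\mathcal{G}'$; together with the common skeleton, \citet{chain} yields that $\mathcal{G}$ and $\mathcal{G}'$ are Markov equivalent. The main obstacle is the combinatorial auxiliary claim of the second paragraph — this is precisely where Proposition~\ref{cpdagprop} and acyclicity do the work — together with being careful about the exact form of the chain-graph equivalence criterion in \citet{chain} and about the implicit non-emptiness of $[\mathcal{G}]$.
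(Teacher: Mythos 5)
Your proof is correct and follows essentially the same route as the paper: it reduces the claim to the chain-graph Markov equivalence criterion of \citet{chain} (same skeleton and minimal complexes), uses Proposition \ref{cpdagprop} together with the absence of semi-directed cycles to show that the minimal complexes of an SA-MPDAG are exactly its unshielded colliders, and then identifies these with the v-structures of the DAGs in \([\mathcal{G}]\). The only difference is that you spell out in detail (the inductive claim along undirected paths, and the Meek's-rules argument that the unshielded colliders of \(\mathcal{G}\) coincide with those of every \(\mathcal{G}'\in[\mathcal{G}]\)) what the paper asserts more tersely ``by construction''.
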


In Proposition \ref{cpdagmarkov}, the Markov property for SA-MPDAG \(\mathcal{G}\) refers to the Markov property on chain graphs \citep{chain}, of which SA-MPDAGs are a subclass. Hence, Proposition \ref{cpdagmarkov} implies that if a joint density \(p\) is compatible to SA-MPDAG \(\mathcal{G}\), then \(p\) is Markovian to SA-MPDAG \(\mathcal{G}\), admitting a chain graph factorisation using \(\mathcal{G}\).

\begin{proof}[Proof of Proposition \ref{cpdagmarkov}]

     We use the following from \citet{chain}:
    \begin{lemma}\label{lemmarkov}
        Two chain graphs \(\mathcal{G}'\) and \(\mathcal{G}\) are Markov equivalent if they have the same skeleton and minimal complexes.
    \end{lemma}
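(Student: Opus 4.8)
The plan is to establish the stated (\emph{if}) direction by working with the global Markov (moralization) criterion for chain graphs, under which an independence \(\boldsymbol{A}\ci\boldsymbol{B}\cd\boldsymbol{S}\) (for disjoint \(\boldsymbol{A},\boldsymbol{B},\boldsymbol{S}\)) is implied by a chain graph \(\mathcal{H}\) exactly when \(\boldsymbol{S}\) separates \(\boldsymbol{A}\) from \(\boldsymbol{B}\) in the moral graph of the induced subgraph on the smallest ancestral set containing \(\boldsymbol{A}\cup\boldsymbol{B}\cup\boldsymbol{S}\). Two chain graphs are Markov equivalent precisely when they imply the same such independences, so it suffices to show that \(\mathcal{G}\) and \(\mathcal{G}'\) return the same separation verdict for every triple \(\boldsymbol{A},\boldsymbol{B},\boldsymbol{S}\). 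By the symmetry of the hypothesis it is enough to prove one inclusion, which by contraposition amounts to transferring a connecting path (one joining \(\boldsymbol{A}\) to \(\boldsymbol{B}\) and avoiding \(\boldsymbol{S}\)) from the moral graph built from \(\mathcal{G}'\) to the one built from \(\mathcal{G}\).

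First I would analyse the edges of a moral graph, which are of two kinds: \emph{skeleton edges}, inherited from adjacencies of the chain graph, and \emph{marriage edges}, added between two nonadjacent nodes lying in the parent set of a common chain component. The skeleton edges agree by hypothesis. For the marriage edges, the key step is to prove the correspondence: a marriage edge joins nonadjacent \(u\) and \(w\) if and only if \(u\) and \(w\) are the two sources of a complex, i.e. there is an undirected path inside a single chain component whose two endpoints receive arrowheads \(u\to\cdot\) and \(w\to\cdot\) with \(u\not\sim w\). The forward direction follows by taking a shortest undirected path between the relevant children of \(u\) and \(w\) inside the chain component; the reverse is immediate. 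Since every such complex reduces to a minimal complex with the same pair of sources, equality of the minimal complex sets forces equality of the marriage edges. Consequently, restricted to a \emph{common} vertex set the two moral graphs would have identical edges, and the separation verdicts would coincide at once.

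The main obstacle is that the moral graph is built over the \emph{smallest ancestral set} \(\text{An}(\boldsymbol{A}\cup\boldsymbol{B}\cup\boldsymbol{S})\), which is orientation-dependent: since the hypotheses constrain orientations only through the shared complexes, an edge directed in \(\mathcal{G}\) may be undirected or reversed in \(\mathcal{G}'\), so the two ancestral closures---and hence which restricted complexes contribute marriage edges---need not coincide. I would resolve this by an induction on the length of a shortest connecting path in the \(\mathcal{G}'\)-moral graph, rebuilding a connecting path in the \(\mathcal{G}\)-moral graph edge by edge: skeleton edges transfer directly, each marriage edge is replayed through a complex realising its source pair, and whenever a vertex entered one ancestral closure through a directed path I would exhibit an admissible connecting subpath on the other side. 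The heart of the argument is to show that the symmetric difference of the two orientations consists only of edges whose reorientation alters neither the complex set nor the reachability relevant to moralization, so that no separating set is ever created or destroyed; making this rerouting watertight, rather than the edge classification of the previous step, is where the real work lies.
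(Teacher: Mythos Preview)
The paper does not prove this lemma at all: it is introduced with ``We use the following from \citet{chain}'' and is invoked as a black box inside the proof of Proposition~\ref{cpdagmarkov}. There is therefore no in-paper argument to compare against; anything you supply here already exceeds what the authors do.

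On the substance of your plan: the reduction to the moralization criterion, the split into skeleton versus marriage edges, and the observation that a marriage edge between nonadjacent \(u,w\) occurs exactly when \(u,w\) are the two sources of some complex (hence is determined by the set of minimal complexes) are all correct and are indeed the structural core of Frydenberg's argument. You have also correctly located the real obstruction, namely that \(\text{An}_{\mathcal{G}}(\boldsymbol{A}\cup\boldsymbol{B}\cup\boldsymbol{S})\) and \(\text{An}_{\mathcal{G}'}(\boldsymbol{A}\cup\boldsymbol{B}\cup\boldsymbol{S})\) need not coincide, so the two moral graphs are built on different vertex sets.

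Where the proposal is thin is exactly the step you flag as ``where the real work lies.'' The path-rebuilding induction you sketch is not yet an argument: a skeleton edge on the \(\mathcal{G}'\)-path may lie entirely outside \(\text{An}_{\mathcal{G}}(\boldsymbol{A}\cup\boldsymbol{B}\cup\boldsymbol{S})\), and the detour you take to reach the \(\mathcal{G}\)-ancestral set can introduce further vertices with the same defect, so termination and avoidance of \(\boldsymbol{S}\) are not automatic. The sentence ``the symmetric difference of the two orientations consists only of edges whose reorientation alters neither the complex set nor the reachability relevant to moralization'' is a restatement of the desired conclusion rather than a mechanism for obtaining it. One standard way to close this gap is to prove first that equal skeletons and complexes force the two graphs to have the \emph{same} anterior (undirected-or-directed ancestor) closure of any set, or at least that one may pass to a set ancestrally closed in \emph{both} graphs without changing either separation verdict; once the vertex sets agree, your edge analysis finishes the proof immediately. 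Without a lemma of roughly that strength, the induction remains a plan rather than a proof.
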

    
    A minimal complex in chain graph \(\mathcal{G}\) is
    a triple  \((I,\boldsymbol{C},J)\) such that node \(I\) is not adjacent to node \(J\), and \(\boldsymbol{C}\) is the minimal subset of a chain component such that \(I\rightarrow C_1\) and \(J\rightarrow C_2\) for some \(C_1,C_2\in \boldsymbol{C}\). We will show that all minimal complexes in SA-MPDAG \(\mathcal{G}\) are unshielded colliders. Since in SA-MPDAG \(\mathcal{G}\),  \(I\rightarrow J-K\)  implies that \(I\) is adjacent to \(K\), and we must have \(I\rightarrow K\) otherwise the semi-directed cycle \(I\rightarrow J- K \rightarrow I\) or \(I\rightarrow J- K - I\) will occur in SA-MPDAG \(\mathcal{G}\), a contradiction. Thus, all minimal complexes in \(\mathcal{G}\) are of of the form \((I,K,J)\), with \(K\) a single node and is thus a collider.

    By construction, the colliders in  represented DAG \(\mathcal{G}'\) and SA-MPDAG \(\mathcal{G}\) must coincide and share the same skeleton, and since DAGs are chain graphs, by Lemma \ref{lemmarkov}, DAG \(\mathcal{G'}\) is Markov equivalent to SA-MPDAG \(\mathcal{G}\). 
\end{proof}

We also have the following observation on output graphs of Algorithm \ref{rwgraph}.
\begin{lemma}
    If there are no proper semi-directed paths from \(\boldsymbol{X}\) to \(\boldsymbol{Y}\) in SA-MPDAG \(\mathcal{G}\) that start with an undirected edge, then in \(RM(\mathcal{G}; \boldsymbol{X}, \boldsymbol{Y})\), \(I\rightarrow J-K\) implies \(I\rightarrow K\).
\end{lemma}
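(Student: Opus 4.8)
The plan is to observe that $RM(\mathcal{G};\boldsymbol{X},\boldsymbol{Y})$ arises from $\mathcal{G}$ purely by \emph{deletions}: step~1 only removes directed edges, and step~2 only discards vertices together with their incident edges. Hence every edge of $RM(\mathcal{G};\boldsymbol{X},\boldsymbol{Y})$ is an edge of $\mathcal{G}$ of the same type, and in particular a configuration $I\rightarrow J - K$ in $RM(\mathcal{G};\boldsymbol{X},\boldsymbol{Y})$ forces $I\rightarrow J$ and $J - K$ to be present in $\mathcal{G}$. First I would invoke Proposition~\ref{cpdagprop} to get that $I$ is adjacent to $K$ in $\mathcal{G}$, and then rerun the short acyclicity argument from the proof of Proposition~\ref{cpdagmarkov}: the edge between $I$ and $K$ cannot be $I - K$ nor $K\rightarrow I$, since either would complete a semi-directed cycle $\langle I, J, K, I\rangle$ in the SA-MPDAG $\mathcal{G}$; so $\mathcal{G}$ contains $I\rightarrow K$. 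It then remains only to check that this edge survives both deletion steps.

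For step~2 there is nothing to do: since $I\rightarrow J - K$ appears in $RM(\mathcal{G};\boldsymbol{X},\boldsymbol{Y})$, all three of $I, J, K$ are vertices of it, i.e. all lie in $\text{An}_{\mathcal{G}(\not\rightarrow\boldsymbol{X})}(\boldsymbol{Y})$, so the edge $I\rightarrow K$ is retained by the induced-subgraph operation provided it is still present in $\mathcal{G}(\not\rightarrow\boldsymbol{X})$. So the one thing to rule out is that $I\rightarrow K$ is removed in step~1, which would require $K\in\boldsymbol{X}\cap\text{An}_{\mathcal{G}}(\boldsymbol{Y})$.

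Here is where I expect the real work, and where the hypothesis on semi-directed paths is used. I would argue by contradiction: assume $K\in\boldsymbol{X}\cap\text{An}_{\mathcal{G}}(\boldsymbol{Y})$. The node $J$ is a vertex of $RM(\mathcal{G};\boldsymbol{X},\boldsymbol{Y})$, hence $J\in\text{An}_{\mathcal{G}(\not\rightarrow\boldsymbol{X})}(\boldsymbol{Y})\subseteq\text{An}_{\mathcal{G}}(\boldsymbol{Y})$, and also $J\notin\boldsymbol{X}$, for if it were then $I\rightarrow J$ would have been removed in step~1 and could not appear in $RM(\mathcal{G};\boldsymbol{X},\boldsymbol{Y})$. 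If $J\in\boldsymbol{Y}$, then $K - J$ is already a proper semi-directed path from $\boldsymbol{X}$ to $\boldsymbol{Y}$ that starts with an undirected edge, contradicting the hypothesis. Otherwise, $J\in\text{An}_{\mathcal{G}(\not\rightarrow\boldsymbol{X})}(\boldsymbol{Y})$ gives a directed path $J\rightarrow\cdots\rightarrow Y$ in $\mathcal{G}(\not\rightarrow\boldsymbol{X})$ with $Y\in\boldsymbol{Y}$; I would truncate it at its first $\boldsymbol{Y}$-vertex and note that every vertex of it other than $J$ retains an incoming edge in $\mathcal{G}(\not\rightarrow\boldsymbol{X})$. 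But step~1 strips all incoming edges from every node of $\boldsymbol{X}\cap\text{An}_{\mathcal{G}}(\boldsymbol{Y})$, so, as each vertex of the (truncated) directed path lies in $\text{An}_{\mathcal{G}}(\boldsymbol{Y})$, none of them can lie in $\boldsymbol{X}$. Prepending the undirected edge $K - J$ therefore yields a proper semi-directed path in $\mathcal{G}$ from $\boldsymbol{X}$ to $\boldsymbol{Y}$ starting with an undirected edge --- a contradiction. Hence $K\notin\boldsymbol{X}\cap\text{An}_{\mathcal{G}}(\boldsymbol{Y})$, the edge $I\rightarrow K$ is not deleted in step~1, and by the previous paragraph it belongs to $RM(\mathcal{G};\boldsymbol{X},\boldsymbol{Y})$. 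The main obstacle is exactly this contradiction step --- promoting the bare directed witness of $J\in\text{An}_{\mathcal{G}}(\boldsymbol{Y})$ to a genuinely \emph{proper} semi-directed path emanating from $\boldsymbol{X}$, which hinges on the small observation that nodes of $\boldsymbol{X}\cap\text{An}_{\mathcal{G}}(\boldsymbol{Y})$ become sources in $\mathcal{G}(\not\rightarrow\boldsymbol{X})$ and so cannot sit in the interior of any directed path there.
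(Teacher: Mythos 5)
Your proof is correct and takes essentially the same route as the paper: the paper's own argument is that any chain component meeting $\boldsymbol{X}\cap\textnormal{An}_{\mathcal{G}}(\boldsymbol{Y})$ can have no non-$\boldsymbol{X}$ node left in $\textnormal{An}_{\mathcal{G}(\not\rightarrow \boldsymbol{X})}(\boldsymbol{Y})$, which is exactly your contradiction via the proper semi-directed path $K - J \rightarrow \cdots \rightarrow Y$, just stated at the chain-component level rather than per triple. Your version also makes explicit some details the paper leaves implicit (that $I\rightarrow K$ exists in $\mathcal{G}$ by Proposition~\ref{cpdagprop} together with acyclicity, that it survives step~2, and the properness check), which is fine.
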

\begin{proof}
    Observe that chain components \(\boldsymbol{\tau}\) with directed edges into \(\boldsymbol{\tau}\) being removed in \(RM(\mathcal{G}; \boldsymbol{X}, \boldsymbol{Y})\) are the chain components \(\boldsymbol{\tau}\) such that \(\boldsymbol{\tau}\cap \boldsymbol{X}\neq \emptyset\), for which we have \(\boldsymbol{\tau}\backslash \boldsymbol{X}\cap \text{An}_{\mathcal{G}(\not \rightarrow \boldsymbol{X})}(\boldsymbol{Y})=\emptyset\), otherwise there exists a proper semi-directed path from \(\boldsymbol{X}\) to \(\boldsymbol{Y}\) that starts with undirected edges. Thus, chain components \(\boldsymbol{\tau}\) with directed edges into \(\boldsymbol{\tau}\) being removed does not appear as nodes in the induced subgraph \(\mathcal{G}(\not \rightarrow \boldsymbol{X})_{\text{An}_{\mathcal{G}(\not \rightarrow \boldsymbol{X})}(\boldsymbol{Y})}\), and thus \(RM(\mathcal{G}; \boldsymbol{X}, \boldsymbol{Y})\). 
\end{proof}

Observe that the proof of Proposition \ref{cpdagmarkov} only uses the fact that \(I\rightarrow J-K\) implies that \(I\rightarrow K\) in graph \(\mathcal{G}\), and \(\mathcal{G}'\) and \(\mathcal{G}\) share skeletons and unshielded colliders. This property is also satisfied by the graph \(RM(\mathcal{G}; \boldsymbol{X},\boldsymbol{Y})\) obtained from input SA-MPDAG \(\mathcal{G}\) and node subsets \(\boldsymbol{X}\) and \(\boldsymbol{Y}\) that satisfy the condition of Theorem \ref{perkt} in graph \(\mathcal{G}\). Thus, analogously \(RM(\mathcal{G}; \boldsymbol{X},\boldsymbol{Y})\) is Markov equivalent to some DAG \(\mathcal{G}\).

We have the following proposition.
\begin{proposition}\label{equal factorise}
    For SA-MPDAGs \(\mathcal{G}_1\) and \(\mathcal{G}_2\) and node subsets \(\boldsymbol{X}\) and \(\boldsymbol{Y}\) that satisfies the condition of Theorem \ref{perkt} in both \(\mathcal{G}_1,\mathcal{G}_2\), if \(RM(\mathcal{G}_1; \boldsymbol{X},\boldsymbol{Y})\) is Markov equivalent to \(RM(\mathcal{G}_2; \boldsymbol{X},\boldsymbol{Y})\), then the factorisation of density \(p\) based on either graph has the same evaluation.
\end{proposition}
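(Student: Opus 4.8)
The aim is to show $q_1 = q_2$, where $q_i$ denotes the factorisation of $p$ based on $\mathcal{G}_i$, i.e.\ \(q_i := \prod_{j} p_{\boldsymbol{B_j}}(\boldsymbol{b_j}\cd\textnormal{pa}_{\mathcal{G}_i}(\boldsymbol{B_j}))\) with the parent coordinates lying in $\boldsymbol{X}$ held at the intervened values $\boldsymbol{x}$ and $\{\boldsymbol{B_1},\dots,\boldsymbol{B_k}\} = CD_{\mathcal{G}_i}(\textnormal{An}_{(\mathcal{G}_i)_{\boldsymbol{V\setminus X}}}(\boldsymbol{Y}))$. I would first reduce the claim to a statement about a single graph. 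Writing $\mathcal{H}_i := RM(\mathcal{G}_i;\boldsymbol{X},\boldsymbol{Y})$, the preceding lemma together with the argument in the proof of Proposition \ref{cpdagmarkov} (which, as noted just afterwards, applies to the output of Algorithm \ref{rwgraph}) shows that each $\mathcal{H}_i$ is a chain graph whose minimal complexes are unshielded colliders and which is Markov equivalent to a DAG $D_i$. Since $\mathcal{H}_1$ and $\mathcal{H}_2$ are Markov equivalent, by Lemma \ref{lemmarkov} they share a skeleton — hence a common vertex set $\boldsymbol{W}$ — and $D_1, D_2$ are Markov equivalent DAGs, so there is a single DAG $D$ that is Markov equivalent to both $\mathcal{H}_1$ and $\mathcal{H}_2$.

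The next step is bookkeeping that re-expresses each $q_i$ over $\boldsymbol{W}$. Using Algorithm \ref{rwgraph} and the hypothesis that $\boldsymbol{X},\boldsymbol{Y}$ satisfy the condition of Theorem \ref{perkt} in $\mathcal{G}_i$, one checks that $\textnormal{pa}_{\mathcal{G}_i}(\boldsymbol{B_j}) = \textnormal{pa}_{\mathcal{H}_i}(\boldsymbol{B_j})$ for every bucket $\boldsymbol{B_j}$ (all such parents already lie in $\boldsymbol{W}$ and no edge incident to $\boldsymbol{B_j}$ is deleted), that the buckets partition $\boldsymbol{W}\setminus\boldsymbol{X}$, and that the nodes of $\boldsymbol{X}$ surviving in $\mathcal{H}_i$ are parent-free and attached to $\boldsymbol{W}\setminus\boldsymbol{X}$ only by out-oriented edges, so that in $q_i$ they enter solely as the fixed values $\boldsymbol{x}$. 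Hence $q_i$ is a function of the variables on $\boldsymbol{W}\setminus\boldsymbol{X}$ alone, assembled from conditionals of $p$ attached to the chain-component structure that $\mathcal{G}_i$ induces on $\boldsymbol{W}$; equivalently, by Theorem \ref{perkt}, $q_i$ is the interventional ancestral margin $p^{\textnormal{do}_{\mathcal{G}_i}(\boldsymbol{x})}$ restricted to $\textnormal{An}_{(\mathcal{G}_i)_{\boldsymbol{V\setminus X}}}(\boldsymbol{Y}) = \boldsymbol{W}\setminus\boldsymbol{X}$.

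The main obstacle is to pass $q_i$ across the Markov equivalence into a form that no longer refers to $i$. The plan is to rewrite each product, by the chain rule, as a factorisation over the common DAG $D$: orient $\mathcal{H}_i$ into $D$, expand each factor $p_{\boldsymbol{B_j}}(\boldsymbol{b_j}\cd\textnormal{pa}_{\mathcal{H}_i}(\boldsymbol{B_j}))$ into a product of vertexwise conditionals of $p$ following $D$'s orientation of $\mathcal{H}_i$, and argue that the resulting vertexwise conditionals $p(v\cd\textnormal{pa}_D(v))\big|_{\boldsymbol{x}}$ are the same whether obtained by starting from $\mathcal{H}_1$ or from $\mathcal{H}_2$. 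The validity of each chain-rule step rests on a conditional independence, and the independences needed are exactly those held in common by $\mathcal{H}_1,\mathcal{H}_2$ and $D$ (which have the same skeleton and minimal complexes, Lemma \ref{lemmarkov}); positivity of $p$ keeps all the conditionals well defined. Equivalently, one can decompose the Markov equivalence between $\mathcal{H}_1$ and $\mathcal{H}_2$ into a chain of elementary operations that preserve skeleton and minimal complexes (legal mergings and splittings of chain components along an edge, or covered-edge reversals) and check, one operation at a time, that the evaluated product $\prod_j p_{\boldsymbol{B_j}}(\,\cdot\,\cd\textnormal{pa}(\boldsymbol{B_j}))\big|_{\boldsymbol{x}}$ is invariant. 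The delicate point in carrying this out is handling buckets that are proper subsets of chain components of $\mathcal{G}_i$ — which can happen when $\textnormal{An}_{(\mathcal{G}_i)_{\boldsymbol{V\setminus X}}}(\boldsymbol{Y})$ cuts a chain component — for which one must verify that the relevant conditionals and independences still line up between $\mathcal{H}_1$ and $\mathcal{H}_2$; once they do, multiplying the vertexwise factors over $D$ yields $q_1 = q_2$.
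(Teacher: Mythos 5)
Your proposal takes essentially the same route as the paper: each $RM(\mathcal{G}_i;\boldsymbol{X},\boldsymbol{Y})$ is shown (via the remark following Proposition \ref{cpdagmarkov}) to be Markov equivalent to a DAG, and the evaluation of the factorisation is then transferred across the Markov equivalence class, your elementary-operation/covered-edge-reversal step being precisely the paper's appeal to the covered edge flip result of \citet{ceflips}. The additional bookkeeping you supply (parents of the buckets agreeing in $\mathcal{G}_i$ and $\mathcal{H}_i$, surviving $\boldsymbol{X}$-nodes being parent-free) is consistent with, and somewhat more explicit than, the paper's proof.
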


\begin{proof}[Proof of Proposition \ref{equal factorise}]
 \(RM(\mathcal{G}_1; \boldsymbol{X},\boldsymbol{Y})\), \(RM(\mathcal{G}_2; \boldsymbol{X},\boldsymbol{Y})\), and respective DAGs that \(RM(\mathcal{G}_1; \boldsymbol{X},\boldsymbol{Y})\), \(RM(\mathcal{G}_2; \boldsymbol{X},\boldsymbol{Y})\) are Markov equivalent to, \(\mathcal{G}'_1\) and \(\mathcal{G}'_2\), are all Markov equivalent. 

Since there always exists a sequence of covered edge flips that transforms between Markov equivalent DAGs \cite{ceflips}, the  factorisation of the density \(p\) based on DAG \(\mathcal{G}'_1\) can be transformed into the factorisation of 
 the same density \(p\) based on DAG \(\mathcal{G}'_2\) without changing the evaluation. By Proposition \ref{cpdagmarkov}, the factorisation of \(p\) based on DAG \(\mathcal{G}'_1\) and the factorisation of \(p\) based on \(RM(\mathcal{G}_1; \boldsymbol{X},\boldsymbol{Y})\) have the same evaluation, likewise for DAG \(\mathcal{G}_2\) and \(RM(\mathcal{G}_2; \boldsymbol{X},\boldsymbol{Y})\). Thus, the factorisation of density \(p\) based on either \(RM(\mathcal{G}_1; \boldsymbol{X},\boldsymbol{Y})\) or \(RM(\mathcal{G}_2; \boldsymbol{X},\boldsymbol{Y})\) has the same evaluation.
\end{proof}

We also have the following proposition re-expressing the identification formula in Theorem \ref{perkt} from \cite{perk}, as a marginal distribution of a re-weighted distribution of the observational density \(p\). In the case when \(\mathcal{G}\) is a DAG, both sides reduces to the truncated factorisation formula \cite{pearlbook}.
\begin{proposition}\label{conversion}
    For variables \(\boldsymbol{X}\) (with intervened value \(\boldsymbol{x}\)) and variables \(\boldsymbol{Y}\) in SA-MPDAG \(\mathcal{G}\) such that the condition of Theorem \ref{perkt} holds, i.e. there does not exist proper semi-directed paths that start with an undirected edge from \(X\) to \(Y\) in \(\mathcal{G}\), and observational density \(p\) compatible to \(\mathcal{G}\),
    \begin{equation}
    \int \prod^{k}_{j=1}p_{\boldsymbol{B}_j}(\boldsymbol{b_j}\cd {\textnormal{pa}_{\mathcal{G}}(\boldsymbol{B_j})})d\boldsymbol{b}=\int \frac{p(\boldsymbol{v})}{\prod^{m}_{j=1} p_{\boldsymbol{X_j}}(\boldsymbol{x_j}\cd \boldsymbol{\textnormal{pa}_{\mathcal{G}}(X_j))}} d\boldsymbol{v'}, \label{reweighting}
\end{equation}
where \(\{\boldsymbol{B_1},\ldots, \boldsymbol{B_{k}}\} = CD_{\mathcal{G}}(\textnormal{An}_{\mathcal{G}_{\boldsymbol{V\backslash X}}}(\boldsymbol{Y}))\),  \(\boldsymbol{B}=\textnormal{An}_{\mathcal{G}_{\boldsymbol{V\backslash X}}}(\boldsymbol{Y})\backslash \boldsymbol{Y}\), \(\{\boldsymbol{X_1},...,\boldsymbol{X_{m}}\}= CD_{\mathcal{G}}(\boldsymbol{X}\cap \textnormal{An}_{\mathcal{G}}(\boldsymbol{Y}))\), and \(\boldsymbol{V}'=\boldsymbol{V\backslash(X\cup Y)}\), for values \({\textnormal{pa}_{\mathcal{G}}(\boldsymbol{B_j})}\) and \({\textnormal{pa}_{\mathcal{G}}(\boldsymbol{X_j})}\) that agree with the intervened values \(\boldsymbol{x}\).
\end{proposition}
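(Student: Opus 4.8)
The plan is to transform the right-hand side of (\ref{reweighting}) directly into the left-hand side, by first replacing $p(\boldsymbol{v})$ with its chain-graph factorisation and then marginalising out the variables that play no role. (Both sides in fact equal $p^{\textnormal{do}_{\mathcal{G}}(\boldsymbol{x})}_{\boldsymbol{Y}}(\boldsymbol{y})$ — the left-hand side is verbatim the formula of Theorem \ref{perkt} — but the identity of the two displayed integrals is cleanest to argue directly.)

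\textbf{Step one: factorise and cancel.} Since $p$ is compatible to $\mathcal{G}$, Proposition \ref{cpdagmarkov} gives that $p$ is Markovian to the chain graph $\mathcal{G}$, so $p(\boldsymbol{v})=\prod_{\boldsymbol{\tau}}p_{\boldsymbol{\tau}}(\boldsymbol{v}_{\boldsymbol{\tau}}\cd\textnormal{pa}_{\mathcal{G}}(\boldsymbol{\tau}))$, the product over the chain components $\boldsymbol{\tau}$ of $\mathcal{G}$ (the conditioning set being unambiguous by Proposition \ref{cpdagprop}). For each $\boldsymbol{X_j}\in CD_{\mathcal{G}}(\boldsymbol{X}\cap\textnormal{An}_{\mathcal{G}}(\boldsymbol{Y}))$, Proposition \ref{cpdagprop} gives $\textnormal{pa}_{\mathcal{G}}(\boldsymbol{X_j})=\textnormal{pa}_{\mathcal{G}}(\tau_{\mathcal{G}}(\boldsymbol{X_j}))$; writing the factor for $\boldsymbol{\tau}=\tau_{\mathcal{G}}(\boldsymbol{X_j})$ as $p_{\boldsymbol{X_j}}(\boldsymbol{x_j}\cd\textnormal{pa}_{\mathcal{G}}(\boldsymbol{\tau}))\,p_{\boldsymbol{\tau}\backslash\boldsymbol{X_j}}(\boldsymbol{v}_{\boldsymbol{\tau}\backslash\boldsymbol{X_j}}\cd\boldsymbol{x_j},\textnormal{pa}_{\mathcal{G}}(\boldsymbol{\tau}))$, the denominator of (\ref{reweighting}) cancels the first piece. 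The right-hand integrand then becomes a product of one conditional density per chain component, in which the nodes of $\boldsymbol{X}\cap\textnormal{An}_{\mathcal{G}}(\boldsymbol{Y})$ occur only as conditioning values fixed at $\boldsymbol{x}$.

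\textbf{Step two: marginalise out the inessential nodes.} Write $\boldsymbol{W}=\textnormal{An}_{\mathcal{G}_{\boldsymbol{V\backslash X}}}(\boldsymbol{Y})$, so $CD_{\mathcal{G}}(\boldsymbol{W})=\{\boldsymbol{B_1},\dots,\boldsymbol{B_k}\}$. Two graphical facts are needed: (i) every parent of a node of $\boldsymbol{W}$ lies in $\boldsymbol{W}\cup(\boldsymbol{X}\cap\textnormal{An}_{\mathcal{G}}(\boldsymbol{Y}))$, hence is a fixed value rather than a free integration variable (immediate from the definitions); and (ii) no chain component meets both $\boldsymbol{W}$ and $\boldsymbol{X}$, since an undirected path inside such a component followed by a directed path to $\boldsymbol{Y}$ would be a proper semi-directed path from $\boldsymbol{X}$ to $\boldsymbol{Y}$ starting with an undirected edge (strict acyclicity of $\mathcal{G}$ keeping it a path), contradicting the hypothesis. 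Now process the chain components in a reverse topological order of the chain-component DAG, integrating out, for each, its nodes outside $\boldsymbol{W}\cup(\boldsymbol{X}\cap\textnormal{An}_{\mathcal{G}}(\boldsymbol{Y}))$; by (i), (ii) and Proposition \ref{cpdagprop} those nodes appear in no other surviving factor, so a component disjoint from $\boldsymbol{W}$ contributes a factor integrating to $1$, while a component $\boldsymbol{\tau}$ meeting $\boldsymbol{W}$ reduces to $p_{\boldsymbol{\tau}\cap\boldsymbol{W}}(\,\cdot\cd\textnormal{pa}_{\mathcal{G}}(\boldsymbol{\tau}))$. Since $\textnormal{pa}_{\mathcal{G}}(\boldsymbol{\tau})=\textnormal{pa}_{\mathcal{G}}(\boldsymbol{\tau}\cap\boldsymbol{W})$ (Proposition \ref{cpdagprop}) and the sets $\boldsymbol{\tau}\cap\boldsymbol{W}$ are exactly $\boldsymbol{B_1},\dots,\boldsymbol{B_k}$, what survives is $\prod_{j=1}^{k}p_{\boldsymbol{B_j}}(\boldsymbol{b_j}\cd\textnormal{pa}_{\mathcal{G}}(\boldsymbol{B_j}))$; integrating the remaining variables $\boldsymbol{b}$ yields the left-hand side of (\ref{reweighting}).

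The main obstacle is making step two rigorous: justifying that the marginalisation can be carried out one chain component at a time in reverse topological order and that at each stage the variables being eliminated occur in a single factor only. This is exactly where (i) (ancestral closure of $\boldsymbol{W}$ modulo the intervened ancestors) and (ii) (no chain component straddling $\boldsymbol{X}$ and $\boldsymbol{W}$, forced by the Theorem \ref{perkt} hypothesis) are used, together with Proposition \ref{cpdagprop} to propagate parent sets across a chain component. The per-factor manipulations in step one are routine once the parent sets are matched.
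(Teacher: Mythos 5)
Your argument is essentially the paper's own proof: it starts from the right-hand side, invokes Proposition \ref{cpdagmarkov} to obtain the chain-graph factorisation of \(p\), cancels the denominator by Bayes' rule on the chain components meeting \(\boldsymbol{X}\cap\textnormal{An}_{\mathcal{G}}(\boldsymbol{Y})\) (using, exactly as the paper does, that the no-undirected-start-path hypothesis forbids a chain component from straddling \(\boldsymbol{X}\) and \(\textnormal{An}_{\mathcal{G}_{\boldsymbol{V\backslash X}}}(\boldsymbol{Y})\)), and then sums out the inessential nodes chain component by chain component in reverse order, matching parent sets via Proposition \ref{cpdagprop} to arrive at the \(p_{\boldsymbol{B_j}}(\boldsymbol{b_j}\cd\textnormal{pa}_{\mathcal{G}}(\boldsymbol{B_j}))\) factors. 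The only differences are cosmetic: you justify the final marginalisation by ancestral closure of \(\textnormal{An}_{\mathcal{G}_{\boldsymbol{V\backslash X}}}(\boldsymbol{Y})\) (your fact (i)) where the paper argues the contrapositive (no directed path from a summed-out node into a surviving parent set), and both treatments pass equally quickly over chain components containing nodes of \(\boldsymbol{X}\setminus\textnormal{An}_{\mathcal{G}}(\boldsymbol{Y})\), whose values are fixed rather than integrated.
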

\begin{proof}[Proof of Proposition \ref{conversion}]
We will start with the RHS, showing that it is equal to the LHS in \ref{reweighting}. By Proposition \ref{cpdagmarkov}, observational density \(p\) is Markovian to SA-MPDAG \(\mathcal{G}\) and we have the following chain graph factorisation
\begin{equation}
    p(\boldsymbol{v})=\prod^{\ell}_{j=1} p_{\boldsymbol{c_j}}( \boldsymbol{c_j}\cd \text{pa}_{\mathcal{G}}(\boldsymbol{C_j})),\label{jointchainfactor}
\end{equation}
where \(\boldsymbol{C_1},\ldots,\boldsymbol{C_{\ell}}\) are the chain components of SA-MPDAG \(\mathcal{G}\).  

 Observe that for any \(j_1,j_2\), we have that \(\boldsymbol{B_{j_1}}\) and \(\boldsymbol{X_{j_2}}\) cannot be from the same chain component, otherwise there exists an undirected path from \(\boldsymbol{X}\) to \(\text{An}_{\mathcal{G}_{\boldsymbol{V\backslash X}}}(\boldsymbol{Y})\), implying a proper semi-directed path that starts with undirected edges from \(\boldsymbol{X}\) to \(\boldsymbol{Y}\) in \(\mathcal{G}\), a contradiction. Thus by Bayes rule and \ref{jointchainfactor}, we can factorise the integrand of the RHS in \ref{reweighting}, as
\begin{equation}\label{eq3}
\prod^{\ell_1}_{j_1=1} p_{\boldsymbol{C_{j_1}}}( \boldsymbol{c_{j_1}}\cd \text{pa}_{\mathcal{G}}(\boldsymbol{C_{j_1}}))\prod^{\ell_2}_{j_2=1} p_{\boldsymbol{C'_{j_2}}}( \boldsymbol{c'_{j_2}}\cd \text{pa}_{\mathcal{G}}(\boldsymbol{C_{j_2}}), \boldsymbol{x_{j_2}}),
\end{equation}
where \(\boldsymbol{C'_{j_2}}=\boldsymbol{C_{j_2}\backslash X_{j_2}}\), and for values \({\textnormal{pa}_{\mathcal{G}}(\boldsymbol{C_{j_1}})}\) and \({\textnormal{pa}_{\mathcal{G}}(\boldsymbol{C_{j_2}})}\) that agree with the intervened values \(\boldsymbol{x}\).
Note that \(\boldsymbol{C'_{j_2}}\) does not intersect \(\text{An}_{\mathcal{G}_{\boldsymbol{V\backslash X}}}(\boldsymbol{Y})\) otherwise there exists a proper semi-directed path that starts with undirected edges from \(\boldsymbol{X}\) to \(\boldsymbol{Y}\) in \(\mathcal{G}\), a contradiction. Hence, when computing the marginal density of variables \(\boldsymbol{Y}\) using \ref{eq3}, the second product in \ref{eq3} sums to 1,
by summing over each \(\boldsymbol{C'_{j_2}}\) from the last chain component in \(\mathcal{G}_{\boldsymbol{V\backslash X}}\). Likewise the chain components \(\boldsymbol{C_{j_1}}\) such that \(\tau_{j_1}\) does not intersect \(\text{An}_{\mathcal{G}_{\boldsymbol{V\backslash X}}}(\boldsymbol{Y})\) can be summed out. Thus marginalising \ref{eq3} is equivalent to marginalising the expression
\begin{equation}\label{eq4}
    \prod^{\ell'_1}_{j_1=1} p_{\boldsymbol{C_{j_1}}}( \boldsymbol{c_{j_1}}\cd \text{pa}_{\mathcal{G}}(\boldsymbol{C_{j_1}})),
\end{equation}
where \(\boldsymbol{C_1}, \ldots, \boldsymbol{C_{\ell'_1}}\) are the \(\boldsymbol{C_{j_1}}\) in the first product of \ref{eq3} such that \(\boldsymbol{C_{j_1}}\cap \text{An}_{\mathcal{G}_{\boldsymbol{V\backslash X}}}(\boldsymbol{Y})\neq \emptyset\), and for values \({\textnormal{pa}_{\mathcal{G}}(\boldsymbol{C_{j_1}})}\) that agree with the intervened values \(\boldsymbol{x}\).


There do not exist directed paths from \(I\in \boldsymbol{C_{j_1}}\backslash \text{An}_{\mathcal{G}_{\boldsymbol{V\backslash X}}}(\boldsymbol{Y})\) to any of the other \(\text{Pa}_{\mathcal{G}}(\boldsymbol{C_i})\) in \ref{eq4}, otherwise by Proposition \ref{cpdagprop}, \(\text{Pa}_{\mathcal{G}}(\boldsymbol{C_i})= \text{Pa}_{\mathcal{G}}(\boldsymbol{B_i})\), for some \(\boldsymbol{B_i}\) such that \(\boldsymbol{C_i}=\tau_{\mathcal{G}}(\boldsymbol{B_i})\), and we have \(I\in \text{An}_{\mathcal{G}_{\boldsymbol{V\backslash X}}}(\boldsymbol{Y})\), a contradiction. 

Thus, we can sum over the nodes in \(\boldsymbol{C_{j_1}}\backslash \text{An}_{\mathcal{G}_{\boldsymbol{V\backslash X}}}(\boldsymbol{Y})= \boldsymbol{C_{j_1}}\backslash \boldsymbol{B_{j_1}}\), and replace  \(\text{Pa}_{\mathcal{G}}(\boldsymbol{C_{j_1}})= \text{Pa}_{\mathcal{G}}(\boldsymbol{B_{j_1}})\) using Proposition \ref{cpdagprop} in \ref{eq4} gives the LHS of \ref{reweighting}. 
\end{proof}

\begin{proof}[Proof of Theorem \ref{main}]
Applying Theorem \ref{perkt} to each SA-MPDAG \(\mathcal{G}_i\in \mathbb{G}\),  \(p^{\text{do}_{\mathcal{G}_i}(\boldsymbol{x})}_{\boldsymbol{Y
}}\) is identifiable in each \(\mathcal{G}_i\) if and only if condition 1 of Theorem \ref{main} holds, and can be expressed via \ref{reweighting} of Proposition \ref{conversion} for graph SA-MPDAG \(\mathcal{G}_i\) as:
\begin{equation}\label{converted}
    \int \frac{p(\boldsymbol{v})}{\prod^{m}_{j=1} p_{\boldsymbol{X_j}}(\boldsymbol{x_j}\cd \boldsymbol{\textnormal{pa}_{\mathcal{G}_i}(X_j))}} d\boldsymbol{v'}
\end{equation}
   \(\{\boldsymbol{X_1},...,\boldsymbol{X_{m}}\}= CD_{\mathcal{G}_i}(\boldsymbol{X}\cap \text{An}_{\mathcal{G}_i}(\boldsymbol{Y}))\), \(\boldsymbol{V}'=\boldsymbol{V\backslash(X\cup Y)}\), and
   for values \({\textnormal{pa}_{\mathcal{G}_i}(\boldsymbol{X_j})}\) that agree with the intervened values \(\boldsymbol{x}\).
We will show that if condition 2a from Theorem \ref{main} holds for both \(\mathcal{G}_i\) and \(\mathcal{G}_j\), then \ref{converted} will have the same evaluation. 

Note that the LHS of \ref{reweighting} (the identification formula) in Proposition \ref{conversion} for graph \(\mathcal{G}_i\) only contains terms with variables from \(\text{An}_{(\mathcal{G}_i)_{\boldsymbol{V\backslash X}}}(\boldsymbol{Y})\), thus the evaluation of \ref{converted} is not changed by introducing denominator terms of the form \(p_{\boldsymbol{X_\ell}}(\boldsymbol{x_{\ell}}\cd \text{pa}_{\mathcal{G}_i}(\boldsymbol{X_{\ell}}))\), where \(\boldsymbol{X_{\ell}}\) is a subset of some chain component in \(\mathcal{G}_i\), such that \(\boldsymbol{X}_\ell\cap \text{An}_{(\mathcal{G}_i)_{\boldsymbol{V\backslash X}}}(\boldsymbol{Y})=\emptyset\) to the integrand in  \ref{converted}.

Given SA-MPDAGs \(\mathcal{G}_{i}, \mathcal{G}_{j}\in \mathbb{G}\). If condition 2a in Theorem \ref{main} iholds, for \(\boldsymbol{X_{\ell}}\in CD_{\mathcal{G}_{j}}(\boldsymbol{A_{j}\backslash A_{i}})\), we have 
\(p_{\boldsymbol{X}_\ell}(\boldsymbol{x_{\ell}}\cd \text{pa}_{\mathcal{G}_{j}}(\boldsymbol{X_{\ell}}))= p_{\boldsymbol{X}_\ell}(\boldsymbol{x_{\ell}}\cd \text{pa}_{\mathcal{G}_{i}}(\boldsymbol{X_{\ell}}))\).
 Hence, note that we can introduce denominators of the form \(p_{\boldsymbol{X}_\ell}(\boldsymbol{x_{\ell}}\cd \text{pa}_{\mathcal{G}_{j}}(\boldsymbol{X_{\ell}}))\) from the integrand in \ref{converted} for graph \(\mathcal{G}_j\), where \(\boldsymbol{X_{\ell}}\in CD_{\mathcal{G}_{j}}(\boldsymbol{A_{j}\backslash A_{i}})\), as the equivalent term \(p_{\boldsymbol{X}_\ell}(\boldsymbol{x_{\ell}}\cd \text{pa}_{\mathcal{G}_{i}}(\boldsymbol{X_{\ell}}))\) to the integrand in \ref{converted} for graph \(\mathcal{G}_{i}\) without changing the evaluation of \ref{converted} for graph \(\mathcal{G}_i\). Likewise with the roles of \(\mathcal{G}_i\) and \(\mathcal{G}_j\) reversed. We can repeat this procedure until the denominator in the integrand in \(\ref{converted}\) is the same for both \(\mathcal{G}_i\) and \(\mathcal{G}_j\) and since the observational density \(p\) is the same for both \(\mathcal{G}_i\) and \(\mathcal{G}_j\), \ref{converted} will have the same evaluation for both \(\mathcal{G}_i\) and \(\mathcal{G}_j\). 

We will show that if condition 2b from Theorem \ref{main} holds for both \(\mathcal{G}_i\) and \(\mathcal{G}_j\), then the LHS of \ref{reweighting} in Proposition \ref{conversion} (identification formula) will have the same evaluation.
Observe that the integrand density of the identification formula for graph \(\mathcal{G}_i\) factorises based on \(RM(\mathcal{G}_i;\boldsymbol{X},\boldsymbol{Y})\). If condition 2b in Theorem \ref{main} holds for SA-MPDAGs \(\mathcal{G}_{i}, \mathcal{G}_{j}\in \mathbb{G}\), 
 by Proposition \ref{equal factorise}, the integrand density in the identification formula for both \(\mathcal{G}_i\) and \(\mathcal{G}_j\) have the same evaluation.
\end{proof}

\end{document}